\newcommand{\AGM}{\text{AGM}}
\newcommand{\attr}{\text{attr}}
\newcommand{\dom}{\text{dom}}
\newcommand{\cb}{\text{cb}}
\newcommand{\Eq}{\text{Eq}}
\newcommand{\DomOrBox}{\text{\texttt{DomOr}}_{\text{\texttt{BoxMinB}}}}
\newcommand{\DomOrCert}{\text{\texttt{DomOr}}_{\text{\texttt{BoxMinC}}}}
\newcommand{\tuplecolour}{blue!50!white}
\newcommand{\tupleopacity}{0.3}
\newcommand{\certcolour}{red}
\newcommand{\boxcolour}{black}
\newcommand{\tc}{\cellcolor{\tuplecolour}}
\renewcommand{\bar}{\overline}
\renewcommand{\tilde}{\widetilde}
\newcommand{\kaleb}[1]{#1}
\title{Box Covers and Domain Orderings for Beyond Worst-Case Join Processing} 
\author{Kaleb Alway}{University of Waterloo, Canada}{kpalway@uwaterloo.ca}{}{}
\author{Eric Blais}{University of Waterloo, Canada}{eric.blais@uwaterloo.ca}{}{}
\author{Semih Salihoglu}{University of Waterloo, Canada}{semih.salihoglu@uwaterloo.ca}{}{}
\authorrunning{K.~Alway, E.~Blais, and S.~Salihoglu} 
\keywords{Beyond worst-case join algorithms, Tetris, Box covers, Domain orderings}
\begin{document}

\iftoggle{appendix-pgs}{\maketitle}{}

\begin{abstract}
Recent {\em beyond worst-case optimal} join algorithms Minesweeper and its generalization Tetris have brought  the theory of indexing and join processing together by developing a geometric framework for joins. These algorithms take as input an index $\mathcal{B}$, referred to as a {\em box cover}, that stores {\em output gaps} that can be inferred from traditional indexes, such as B+ trees or tries, on the input relations.
The performances of these algorithms highly depend on the {\em certificate} of $\mathcal{B}$, which is the smallest subset of gaps in $\mathcal{B}$ whose union covers all of the gaps in the output space of a query $Q$. Different box covers can have different size certificates and the sizes of both the box covers and certificates highly depend on the ordering of the domain values of the attributes in $Q$. We study how to generate box covers that contain small size certificates to guarantee efficient runtimes for these algorithms. First, given a query $Q$ over a set of relations of size $N$ and a fixed set of domain orderings for the attributes,  we give a $\tilde{O}(N)$-time algorithm called {\em GAMB} which generates a box cover for $Q$ that is guaranteed to contain the smallest size certificate across any box cover for $Q$.  Second, we show that finding a domain ordering to minimize the box cover size and certificate is NP-hard through a reduction from the {\em 2 consecutive block minimization problem} on boolean matrices. 
 Our third contribution is a $\tilde{O}(N)$-time approximation algorithm called {\em ADORA} to compute domain orderings, under which one can compute a box cover of size $\tilde{O}(K^r)$, where $K$ is the minimum box cover for $Q$ under any domain ordering and $r$ is the maximum arity of any relation. This guarantees certificates of size  $\tilde{O}(K^r)$. We combine ADORA and GAMB with Tetris to form a new algorithm we call {\em TetrisReordered}, which provides
 several new beyond worst-case bounds. On infinite families of queries, TetrisReordered's runtimes are unboundedly better than the bounds stated in prior work.
\end{abstract}

\section{Introduction}
\label{sec:introduction}
Performing the natural join of a set of relational tables is a core operation in relational database management systems. After the celebrated result of Atserias, Grohe and Marx~\cite{agm} that provided a tight bound on the maximum (or worst-case) size of natural join queries, now known as the {\em AGM bound}, a new class of {\em worst-case optimal} join algorithms were introduced whose runtimes are asymptotically bounded by the AGM bound. More recently, Ngo et al. and Abo Khamis et al., respectively,  introduced the Minesweeper~\cite{minesweeper} algorithm, and its generalization Tetris~\cite{tetris,tetris-journal}, which adopt a geometric framework for joins and provide {\em beyond worst-case} guarantees that are closer to the highest algorithmic goal of instance optimality. Henceforth, we focus on the Tetris algorithm, the more general of these two algorithms. 

\begin{figure}
\vspace{-10pt}
\begin{center}
\begin{subfigure}{0.35\textwidth}
\begin{center}

\begin{tikzpicture}
\begin{axis}[axis lines=left,xlabel=$B$,ylabel=$A$,
  xmin=0,xmax=8,
  ymin=0,ymax=8,
  tickwidth=0,
  xticklabel interval boundaries,yticklabel interval boundaries,
  xtick={0,1,2,3,4,5,6,7,8},ytick={0,1,2,3,4,5,6,7,8},
  xticklabels={000,001,010,011,100,101,110,111},yticklabels={000,001,010,011,100,101,110,111},
  xticklabel style={rotate=90},
  y label style={at={(axis description cs:0.05,0.5)},anchor=north},
  x label style={at={(axis description cs:0.5,-0.2)},anchor=south},
  grid=both,axis line style={-},
  height=4.3cm,width=4.3cm]
\filldraw[\tuplecolour,opacity=\tupleopacity] (axis cs:0,0) rectangle (axis cs:1,1);
\filldraw[\tuplecolour,opacity=\tupleopacity] (axis cs:0,5) rectangle (axis cs:1,6);
\filldraw[\tuplecolour,opacity=\tupleopacity] (axis cs:3,1) rectangle (axis cs:4,2);
\filldraw[\tuplecolour,opacity=\tupleopacity] (axis cs:3,4) rectangle (axis cs:4,5);
\filldraw[\tuplecolour,opacity=\tupleopacity] (axis cs:6,4) rectangle (axis cs:7,5);
\filldraw[\tuplecolour,opacity=\tupleopacity] (axis cs:7,0) rectangle (axis cs:8,1);
\draw[\boxcolour] (axis cs:1.1,0.1) rectangle (axis cs:6.9,0.9);
\draw[\boxcolour] (axis cs:0.1,1.1) rectangle (axis cs:2.9,1.9);
\draw[\boxcolour] (axis cs:4.1,1.1) rectangle (axis cs:7.9,1.9);
\draw[\boxcolour] (axis cs:0.1,4.1) rectangle (axis cs:2.9,4.9);
\draw[\boxcolour] (axis cs:4.1,4.1) rectangle (axis cs:5.9,4.9);
\draw[\boxcolour] (axis cs:7.1,4.1) rectangle (axis cs:7.9,4.9);
\draw[\boxcolour] (axis cs:1.1,5.1) rectangle (axis cs:7.9,5.9);
\draw[\certcolour,thick] (axis cs:0.1,2.1) rectangle (axis cs:7.9,3.9);
\draw[\certcolour,thick] (axis cs:0.1,6.1) rectangle (axis cs:7.9,7.9);
\end{axis}
\end{tikzpicture}
\end{center}
\end{subfigure}
\begin{subfigure}{0.35\textwidth}
\begin{center}

\begin{tikzpicture}
\begin{axis}[axis lines=left,xlabel=$C$,ylabel=$A$,
  xmin=0,xmax=8,
  ymin=0,ymax=8,
  tickwidth=0,
  xticklabel interval boundaries,yticklabel interval boundaries,
  xtick={0,1,2,3,4,5,6,7,8},ytick={0,1,2,3,4,5,6,7,8},
  xticklabels={000,001,010,011,100,101,110,111},yticklabels={000,001,010,011,100,101,110,111},
  xticklabel style={rotate=90},
  y label style={at={(axis description cs:0.05,0.5)},anchor=north},
  x label style={at={(axis description cs:0.5,-0.2)},anchor=south},
  grid=both,axis line style={-},
  height=4.3cm,width=4.3cm]
\filldraw[\tuplecolour,opacity=\tupleopacity] (axis cs:0,2) rectangle (axis cs:1,4);
\filldraw[\tuplecolour,opacity=\tupleopacity] (axis cs:1,2) rectangle (axis cs:2,3);
\filldraw[\tuplecolour,opacity=\tupleopacity] (axis cs:0,6) rectangle (axis cs:1,7);
\filldraw[\tuplecolour,opacity=\tupleopacity] (axis cs:6,2) rectangle (axis cs:7,3);
\filldraw[\tuplecolour,opacity=\tupleopacity] (axis cs:6,6) rectangle (axis cs:7,7);
\draw[\boxcolour] (axis cs:2.1,2.1) rectangle (axis cs:5.9,2.9);
\draw[\boxcolour] (axis cs:7.1,2.1) rectangle (axis cs:7.9,2.9);
\draw[\boxcolour] (axis cs:1.1,3.1) rectangle (axis cs:7.9,3.9);
\draw[\boxcolour] (axis cs:1.1,6.1) rectangle (axis cs:5.9,6.9);
\draw[\boxcolour] (axis cs:7.1,6.1) rectangle (axis cs:7.9,6.9);
\draw[\boxcolour] (axis cs:0.1,7.1) rectangle (axis cs:7.9,7.9);
\draw[\certcolour,thick] (axis cs:0.1,0.1) rectangle (axis cs:7.9,1.9);
\draw[\certcolour,thick] (axis cs:0.1,4.1) rectangle (axis cs:7.9,5.9);
\end{axis}
\end{tikzpicture}
\end{center}
\end{subfigure}
\end{center}
\vspace{-15pt}
\caption{Box cover and certificates of the query $R(A,B)$ $\bowtie$ $S(A,C)$. Red boxes form a box certificate. Red and black boxes together form a box cover.}
\label{fig:box-cover-certificate}
\vspace{-15pt}
\end{figure}
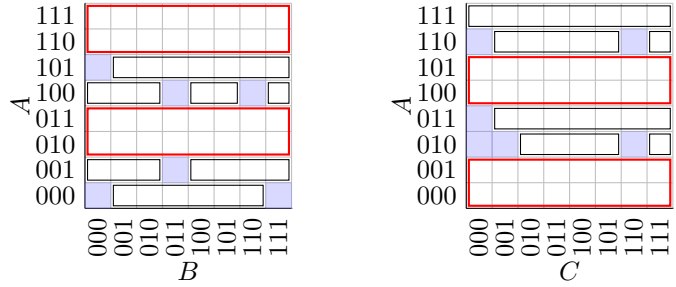

Let $Q$ be a query over $m$ relations $\mathcal{R}$ and $n$ attributes $\mathcal{A}$. Let $N$ be the total number of tuples in $\mathcal{R}$.
Throughout this paper, to match the notation of reference~\cite{tetris}, we use $\tilde{O}$-notation to hide polylogarithmic factors in $N$ as well as the query dependent factors $m$ and $n$.
Unlike traditional join algorithms that operate on input tuples, Tetris takes as input a {\em box cover} $\mathcal{B}=\cup_{R\in\mathcal{R}}\mathcal{B}_R$, where each $\mathcal{B}_R$ is a set of {\em gap boxes} (i.e., tuple-free regions) of the relation $R$ whose union {\em covers} the complement of $R$.
These boxes imply regions in the output space of queries where output tuples cannot exist. Tetris operates on these gaps by performing {\em geometric resolutions}, which generate new gap boxes. 
The runtime of Tetris is bounded by $\tilde{O}\big(\big(C_{\Box}(\mathcal{B})\big)^{w+1}+Z\big)$\footnote{A second upper bound that depends on the number of attributes instead of $w$ is also provided in~\cite{tetris}.} where: 
(i) $C_{\Box}(\mathcal{B})$ is the size of the {\em box certificate} for $\mathcal{B}$, which is the smallest subset of boxes in $\mathcal{B}$ that cover the gaps in the output, i.e., the complement of the output tuples of the join; (ii) $w$ is the {\em treewidth} of $Q$; and (iii) $Z$ is the number of output tuples. 
Figure~\ref{fig:box-cover-certificate} shows an example of this geometric framework on query $R(A, B) \bowtie S(A, C)$. Purple unit boxes indicate input tuples, the boxes in the box cover are shown with rectangles, and the boxes in the certificate are drawn as red rectangles. This Tetris result is analogous to Yannakakis's data-optimal algorithm for acyclic queries and its combination with worst-case optimal join algorithms, which yields results of the form $\tilde{O}(N^{\text{fhtw}} + Z)$, where fhtw is the {\em fractional hypertree width}~\cite{fhtw} and $N$ is the number of tuples in the input. The performance of Tetris's results can be significantly better than Yannakakis-based algorithms, as the certificates are always $\tilde{O}(N)$ and can be $o(N)$, e.g. constant size, on some inputs.

\input{figs/ex-reordering}

In references~\cite{tetris} and~\cite{minesweeper}, a box cover was assumed to be inferred from the available indexes on the relations. Consider a B+ tree index on a relation $R(A, B)$ with sort order ($A$, $B$) and two consecutive tuples ($a_1$, $b_1$) and ($a_1$, $b_2$).\footnote{This example is borrowed from reference~\cite{tetris}.} From these two tuples, a system can infer a gap box ($a_1$, [$b_1+1$, $b_2-1$]) in the output space of any join query that involves $R$. 
The boxes in Figure~\ref{fig:box-cover-certificate} are inferred from B+ tree indexes on $R$ and $S$ with sort orders $(A,B)$ and $(A,C)$, respectively.
\kaleb{Using different indexes can result in box covers with vastly different certificate sizes.} This motivates the first question we study in this paper:

\noindent {\em Question 1: How can a system efficiently generate a good box cover for a set of relations?}

Given a query $Q$, let $C_{\Box}(Q)$
be the minimum certificate size across all possible box covers for the relations in $Q$.\footnote{Note that our use of the notation $C_{\Box}(Q)$ is different from reference~\cite{tetris}, where $\mathcal{B}$  was assumed to be given, and $C_{\Box}(Q)$ was used to indicate the certificate size for $\mathcal{B}$. Since we drop this assumption, $C_{\Box}(\mathcal{B})$ here denotes the certificate size for $\mathcal{B}$ and $C_{\Box}(Q)$ denotes the certificate size over all possible box covers.} An ideal goal for a system would be to efficiently generate a box cover whose certificate is of size $C_{\Box}(Q)$, ensuring performance as a function of $C_{\Box}(Q)$.  We refer to this problem as \texttt{BoxMinC}.
We present a surprisingly positive result for \texttt{BoxMinC}:


\begin{theorem}
\label{thm:box-generation}
Given a database $D$, there is a $\tilde{O}(N)$-time algorithm that can generate a box cover $\mathcal{B}$ of size at most $\tilde{O}(N)$ that 
 contains a certificate of size  $\tilde{O}(C_{\Box}(Q))$  for any join query $Q$ over any subset of relations in $D$. 
\end{theorem}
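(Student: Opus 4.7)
I would try to make the box cover universal: for each relation $R$ in $D$, construct a collection $\mathcal{B}_R$ of gap boxes of $R$ that does not depend on any particular query, and return $\mathcal{B} = \bigcup_{R} \mathcal{B}_R$. The aim is to make $\mathcal{B}_R$ both small (so that $\sum_R |\mathcal{B}_R| = \tilde{O}(N)$) and \emph{dense enough} to approximate any alternative box cover of $R$, in the sense that every gap box of $R$ decomposes into only $\tilde{O}(1)$ boxes of $\mathcal{B}_R$. Granting this, the theorem is immediate: take any box cover $\mathcal{B}^*$ of the relations in $Q$ whose certificate $\mathcal{C}^*$ has size $C_{\Box}(Q)$; replace every $b \in \mathcal{C}^*$ by the $\tilde{O}(1)$ boxes from the appropriate $\mathcal{B}_R$ whose union contains $b$ (each of which is still a gap box, since subsets of gaps are gaps). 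The resulting $\tilde{O}(C_{\Box}(Q))$-sized subset of $\mathcal{B}$ remains a certificate for $Q$.

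\textbf{Construction.} For each relation $R$ of arity $r$, I would use the available indexes (e.g., tries buildable in $\tilde{O}(|R|)$ time) to enumerate the \emph{maximal empty dyadic boxes} of $R$ under the given domain ordering: axis-aligned boxes whose sides along each attribute are dyadic intervals of the form $[k\cdot 2^j,(k{+}1)\cdot 2^j)$, which contain no tuple of $R$ but whose dyadic parent does. A standard charging scheme assigns each such maximal empty box to a tuple of $R$ sitting on one of its faces; since each tuple is charged at most $\tilde{O}(1)$ times (only $O(\log|\dom|)$ levels per attribute, and a constant number of dyadic neighbours per level), one obtains $|\mathcal{B}_R| = \tilde{O}(|R|)$. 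The running time is dominated by building the indexes and sweeping the dyadic hierarchy, both of which can be done in $\tilde{O}(|R|)$.

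\textbf{Approximation and certificate bound.} The approximation step rests on the classical fact that any axis-aligned box $b$ in an $r$-dimensional grid of side $|\dom|$ is the disjoint union of at most $O((\log|\dom|)^r) = \tilde{O}(1)$ canonical dyadic sub-boxes (an $O(\log|\dom|)$ decomposition per attribute, then product). If $b$ is a gap box of $R$, each dyadic sub-box is a gap box of $R$ as well, so by maximality it is contained in some element of $\mathcal{B}_R$. Thus every box of $\mathcal{C}^*$ is ``refined'' into $\tilde{O}(1)$ boxes from $\mathcal{B}_R$, yielding a certificate of size $\tilde{O}(C_{\Box}(Q))$ drawn from $\mathcal{B}$.

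\textbf{Main obstacle.} The delicate part is calibrating $\mathcal{B}_R$ so that it simultaneously meets the $\tilde{O}(|R|)$ size bound and the polylog approximation property. A naive enumeration of all empty dyadic boxes can blow up to $\tilde{\Theta}(|\dom|^r)$, while keeping only a few ``coarse'' ones can force a single box of $\mathcal{C}^*$ to fracture into too many pieces. The right calibration is to retain exactly the \emph{maximal} empty dyadic boxes, so each is witnessed by a tuple of $R$ on its boundary and the charging goes through. The central technical hurdle is controlling the interaction of the $r$ per-attribute dyadic hierarchies simultaneously: the charging must lose only a polylog factor per attribute so that the overall bound remains $\tilde{O}(|R|)$, and the product decomposition of an arbitrary gap box must land inside maximal empty dyadic boxes in each coordinate at once.
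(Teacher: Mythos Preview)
Your proposal is correct and follows essentially the same route as the paper: take $\mathcal{B}_R$ to be the set of maximal dyadic gap boxes of $R$, bound $|\mathcal{B}_R|$ by $\tilde{O}(|R|)$ via a charging argument to a tuple adjacent to each maximal box, and obtain the certificate bound by decomposing each box of an optimal certificate into $\tilde{O}(1)$ dyadic sub-boxes, each of which is contained in (hence replaceable by) some maximal dyadic gap box. The paper makes the charging explicit through an algorithm (GAMB) that loops over tuples $t$, then over the $\tilde{O}(1)$ dyadic boxes containing $t$, then over single-bit ``neighbour'' flips in each attribute; this simultaneously realizes the $\tilde{O}(N)$ running time and the $\tilde{O}(N)$ size bound, and is exactly the concrete instantiation of your ``tuple on a face'' charging.
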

Therefore, in $\tilde{O}(N)$ time and space, a system can generate a globally good box cover (an index) for all possible join queries over a database.\footnote{Since any system has to spend $\Omega(N)$ time to index its tuples, this time is within an $\tilde{O}(1)$ factor of any other indexing approach. \kaleb{Appendix~\ref{sec:gamb-index-maintenance} shows that a box cover index can also be maintained efficiently. The Tetris runtimes in reference~\cite{tetris} do not add a $\tilde{O}(N)$ indexing component because it is assumed that indexes are given. In practice, this cost must be paid at some point by the database to answer queries.}}
We achieve this result by observing that the set of all maximal gap boxes in the complements of the relations contains a certificate of size $|C_{\Box}(Q)|$ and we provide an $\tilde{O}(N)$-time algorithm called {\em GAMB} that {\bf g}enerates {\bf a}ll {\bf m}aximal dyadic gap {\bf b}oxes (and possibly some non-maximal ones) from the relations. 


In the second question we study, we consider evaluating a single query $Q$. There are simple queries
which can be geometrically complex and require large box covers and certificates.
In many cases, these queries can be modified by reordering each attribute's domain
so that smaller covers and certificates are possible.
Figure~\ref{fig:reorder-rows-cols} shows an example.
In the example, the queries $Q$ and
$Q'$ are both triangle queries joining three binary relations.
These queries are equivalent up to reordering the domains of each attribute.
That is,  it is possible to reorder the rows and columns of the grid
in Figure~\ref{fig:reorder-rows-cols-1} to obtain Figure~\ref{fig:reorder-rows-cols-2}.
Let $\sigma$ be the set of three permutations on the domains of $A,B$, and $C$ which transforms
$Q$ into $Q'$. Specifically, for each attribute, $\sigma$ maps the even values to values between 000 and 011, and the odd values to values between 100 and 111. Despite their equivalence up to reorderings, $Q$ requires a box cover of size 96, as each white grid cell in Figure~\ref{fig:reorder-rows-cols-1} must have a unit gap box covering it, while $Q'$ only requires a box cover size 6. 
The same also applies to the certificate sizes, as every gap box in the box cover must also be part of the box certificate in this case. By extending the domains of the attributes, the difference in box cover and certificate sizes can be made arbitrarily large.
Therefore, a system could improve the performance of Tetris significantly by reordering the domains of attributes.
This motivates our second question:

\noindent {\em Question 2: How can a system efficiently reorder the domains to obtain a small box cover?} 


We refer to the problem of finding a domain ordering $\sigma$ such that the minimum box cover size under $\sigma$ is minimized as \texttt{DomOr}$_{\text{\texttt{BoxMinB}}}$. Let $\mathcal{B}^*$ be the minimum size box cover for a query under any domain ordering, $K=|\mathcal{B}^*|$, and $\sigma^*$ be the ordering under which $\mathcal{B}^*$ is achieved. We first provide a hardness result showing that computing $\sigma^*$ is NP-hard through a reduction from the {\em 2 consecutive block minimization problem} on boolean matrices~\cite{2cbmp-nph}. 
We then provide an approximation algorithm, which we refer to as {\em ADORA}, for \textbf{A}pproximate \textbf{D}omain \textbf{Or}dering \textbf{A}lgorithm, to obtain the following result:

\begin{theorem}
\label{thm:intro-qdfbcmp-approx}
Let $r$ be the maximum arity of any relation in the query $Q$
\kaleb{and let $K$ be the minimum box cover size for $Q$ under any domain ordering.}
There is a $\tilde{O}(N)$-time algorithm that computes a domain ordering $\sigma$ for $Q$, under which one can compute a box cover of size $\tilde{O}(K^r)$, guaranteeing a certificate of size $\tilde{O}(K^r)$.
\end{theorem}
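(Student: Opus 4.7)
The plan is to design ADORA in two conceptual phases: a signature-based clustering phase that produces, for each attribute $A$, an ordering of $\dom(A)$ grouping values with similar occurrence patterns, followed by an application of GAMB (Theorem~\ref{thm:box-generation}) on the reindexed database. The key structural observation I would establish first is that under any domain ordering $\sigma^*$ realizing a box cover of size $K$, the domain of each attribute decomposes into at most $O(K)$ contiguous intervals on which the induced projections into the neighbouring relations are ``coherent'', precisely because each gap box projects to an interval along every attribute and at most $K$ such projections can ever appear.

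For the algorithm itself, for each attribute $A$ and each $v \in \dom(A)$ I would compute a compact \emph{signature} recording, for each relation $R$ containing $A$, a hash-based fingerprint of the set $\{t \in R : t.A = v\}$ projected onto $R$'s remaining attributes. These signatures can be computed in $\tilde{O}(N)$ time, the values of $A$ bucketed by signature, and the ordering on $\dom(A)$ taken consistent with the resulting bucket partition. To achieve consistency across shared attributes, I would refine signatures recursively on the arity: first use a coarse one-level signature to order every attribute, then, using the orderings already computed on the remaining attributes, refine each attribute's signature into a hierarchical one that captures the tuple pattern at increasing spatial resolution. A radix-sort style implementation keeps total work at $\tilde{O}(N)$.

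For the analysis, under the resulting ordering $\sigma$ I would argue that for each relation $R$ of arity $r$ the complement of $R$ admits a box cover of size $\tilde{O}(K^r)$: each of $R$'s attributes is partitioned into $\tilde{O}(K)$ coherent intervals, so the product of these partitions yields a grid of at most $\tilde{O}(K^r)$ cells on each of which the tuple/gap pattern is uniform and therefore coverable by a single gap box. Since the minimum box cover size of $Q$ under $\sigma$ is then $\tilde{O}(K^r)$, Theorem~\ref{thm:box-generation} applied on the reindexed database yields a certificate of size $\tilde{O}(K^r)$ via GAMB, and the explicit grid above serves as the claimed $\tilde{O}(K^r)$-size box cover. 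The main obstacle is proving that per-attribute signature refinement produces only $\tilde{O}(K)$ coherent intervals \emph{globally}: shared attributes couple the relations, so an ordering chosen to be coherent for one attribute can in principle destroy coherence for another. Resolving this requires the recursion on arity to maintain an invariant that each newly refined attribute remains consistent with those refined earlier, and it is precisely this compounding of per-attribute losses across the $r$ dimensions of $R$ that turns a single $K$ factor into the $K^r$ factor in the final approximation guarantee.
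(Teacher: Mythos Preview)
Your high-level plan --- compute per-value signatures, group values with identical signatures, then use the resulting grid of at most $(O(K))^{n_R}$ cells per relation --- matches the paper's ADORA construction and its analysis via Lemma~\ref{lemma:few-boxes-few-planes}. Where you diverge is in the perceived obstacle and the machinery you build to overcome it.

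The ``main obstacle'' you identify is not real. You worry that ordering one attribute to be coherent may destroy coherence for another, and therefore propose a recursive refinement on arity. But the equivalence relation the paper uses is \emph{ordering-independent}: two values $a_1,a_2\in\dom(A)$ are declared equivalent when $H(R,A,a_1)=H(R,A,a_2)$ as \emph{sets of tuples}, for every $R$ containing $A$. This is a property of the raw data; it does not refer to any ordering on any other attribute, so there is nothing to keep consistent and no coupling to resolve. The paper's Lemma~\ref{lemma:few-boxes-few-planes} then shows directly that this intrinsic equivalence has at most $2K+1$ classes per attribute, because each class boundary under $\sigma^*$ must coincide with a face of some box in the optimal cover. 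Consequently, a single pass that sorts domain values by the concatenation of their exact hyperplanes (not a hash fingerprint, which could collide) already places each equivalence class contiguously, and the $(2K+1)^{n_R}$ grid argument goes through with no recursion.

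So your proposal is not wrong, but the recursive refinement, the ``hierarchical'' signatures, and the invariant you plan to maintain are all superfluous. If you strip those out and take the signature of $a\in\dom(A)$ to be the literal tuple-set $\bigl(H(R,A,a)\bigr)_{R\ni A}$, you recover the paper's argument exactly; the $K^r$ factor arises simply from taking the product of $r$ many $O(K)$-sized partitions, not from any compounding of losses across refinement rounds.
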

After $\sigma$ is obtained with ADORA, a system can run GAMB, which has the same asymptotic runtime, to obtain a box cover that guarantees certificates of size $\tilde{O}(K^r)$. 
ADORA is based on an intuitive and powerful heuristic that groups the domain values in an attribute that have identical value combinations in the remaining attributes across the relations and makes the values in each group consecutive. Our approximation ratio does not depend on any other parameters of the query, such as different notions of width or the number of relations. 
Once an ordering is obtained, Tetris can be executed on the reordered query and the results converted back to the original domain. This technique is formalized in our algorithm {\em TetrisReordered}.
\kaleb{We construct families of queries for which Tetris on a default ordering has a polynomial runtime with an arbitrarily high degree, but for which TetrisReordered runs in $\tilde{O}(N)$ time.}

\vspace{-10pt}
\section{Notation and Preliminaries}
\label{sec:preliminaries}

\kaleb{Throughout this paper, we work with a fixed database $D$.
A \textit{query} $Q$ is an
equi-join over a set of $m$ fixed relations $\mathcal{R}$ and a set of $n$ attributes $\mathcal{A}$ from $D$. We do not differentiate between a query and a query instance, so $Q$ refers to the instance of $Q$ in $D$.}
As in reference~\cite{tetris}, for ease of presentation we assume the domains of each attribute $A\in\mathcal{A}$ consist of all $d$ bit integers
but our results only require domain values to be discrete and ordered.
For $R\in\mathcal{R}$ and $A\in\mathcal{A}$, the attribute set of $R$ is denoted $\attr(R)$ and the domain of $A$ is denoted $\dom(A)$.

Tetris takes as input a box cover $\mathcal{B}$ that contains {\em dyadic} gap boxes, which are boxes whose span over each attribute is encoded as a binary prefix. Let $R\in\mathcal{R}$  contain $n_R$ attributes. Formally, a dyadic gap box in $\mathcal{B}_R$ is an $n_R$-tuple $b=\langle s_1,s_2,\ldots,s_{n_R}\rangle$ where each $s_i$ is a binary string of length at most $d$. We use $*$ to denote the empty string.
\kaleb{We sometimes use $b.A$ to denote the prefix in $b$ corresponding to attribute $A$.}
For example, if $d$ is 3, the dyadic box $\langle 01, 1\rangle$ for $R$($A_1$, $A_2$) is the box whose $A_1$ and $A_2$ dimensions include all values with prefix $01$ and $1$, respectively, i.e., it is the rectangle with sides $\langle [010-011], [100-111]\rangle$. 
Using dyadic boxes allows Tetris to perform {\em geometric resolutions} (explained momentarily) efficiently, which is needed to prove the runtime bounds of Tetris. 
%

Although the details of how Tetris works are not necessary to understand our techniques and contributions, we give a brief overview as background and refer the reader to reference~\cite{tetris} for details. Assume each box in $\mathcal{B}$, say those coming from $\mathcal{B}_R$, are extended, with prefix $*$, to every attribute not in $\attr(R)$. This allows us to think of $\mathcal{B}$ as a single gap box index over the output space. The core of Tetris is a recursive subroutine that determines whether the set of boxes in  $\mathcal{B}$ covers the entire $n$-dimensional output space $\langle *,*,\ldots,*\rangle$ and returns either YES or NO with an output tuple $o$ as a witness. The witnesses are inserted into $\mathcal{B}$. During the execution, this subroutine performs {\em geometric resolutions} that take two boxes that are adjacent in one dimension and construct a new box that consists of the union of the intervals in this dimension (and the intersection in all others). When boxes are dyadic, geometric resolution can be done in $\tilde{O}(1)$ time.
This recursive subroutine is called as many times as there are output tuples until it finally returns YES.
Two variants of Tetris, called Tetris-Preloaded and Tetris-LoadBalanced run in time $\tilde{O}(C_{\Box}(\mathcal{B})^{w+1}+ Z)$ and $\tilde{O}(C_{\Box}(\mathcal{B})^{n/2} + Z)$, respectively (see Theorems 4.9 and 4.11 in reference~\cite{tetris}). 
$C_{\Box}(\mathcal{B})$ in Tetris's runtime is the box certificate size of $\mathcal{B}$, which is the size of the smallest subset $\mathcal{B'}$ of  $\mathcal{B}$, such that the union of boxes in $\mathcal{B'}$ and the union of boxes in $\mathcal{B}$ cover exactly the same space. Equivalently, $C_{\Box}(\mathcal{B})$ is the size of the smallest subset $\mathcal{B'}$ of $\mathcal{B}$ whose extended boxes (with $*$'s as described above) cover all of the gaps in the output space.

We end this section with a note on dyadic vs. general boxes. The notions of certificate, box cover, and the problems we study can be defined in terms of dyadic or general boxes. Except in Section~\ref{sec:generating-boxes}, the term box refers to general boxes, and our optimization problems are defined over general box covers and certificates.
For both certificates and box covers, the minimum size obtained with dyadic boxes and general boxes are within $\tilde{O}(1)$ of each other.
This is because a dyadic box is a general box by definition and any general box can be partitioned into $\tilde{O}(1)$ dyadic boxes (Proposition B.14 in reference~\cite{tetris}). Our approximation results for general boxes imply approximation results for dyadic boxes up to $\tilde{O}(1)$ factors. However, a hardness result for one version does not imply hardness of the other. Our hardness results apply only to general boxes.
\kaleb{However, we use dyadic boxes extensively because they are a powerful analytical tool which the results of this paper and reference~\cite{tetris} rely on.}

\vspace{-10pt}
\section{Related Work}
\label{sec:related}
 
\subsection{Box Cover Problems}
\label{sec:box-cover-problems}
The complement of a relation $R$ with $k$ attributes can be represented geometrically as a set of axis-aligned, rectilinear polytopes in $k$-dimensional space, which may have holes (the tuples in R form the exteriors of the polytopes).  The number of vertices in these polytopes is bounded (up to a constant factor) by the number of tuples in the relation.
Therefore our work is closely related to covering rectilinear polytopes with a minimum number of rectangles in geometry.  This problem has been previously studied in the 2-dimensional setting, i.e., for polygons. The problem is known to be NP-complete, even when the polygon is hole-free \cite{culberson-reckhow} and MaxSNP-hard for polygons with holes~\cite{berman-dasgupta}. There are several approximation algorithms for the problem. Franzblau \cite{franzblau} designed an algorithm that approximates the optimal solution to a factor of $O(\log n)$, where $n$ is the number of vertices in the polygon. If the polygon is hole-free, the approximation factor improves to~2. Anil Kumar and Ramesh \cite{rectilinear-polygons} showed a tighter approximation ratio of $O(\sqrt{\log n})$ for the same algorithm on polygons with holes. Franzblau et al.~\cite{franzblau-kleitman} also showed the problem is solvable in polynomial time in the special case when polygons are vertically convex. All of these results are limited to 2D and little is known about the problem in higher dimensions.

The approximation algorithms above can be used to generate box covers for the complement of a binary relation $R$. This is a special case of \texttt{BoxMinC}, where the input is a trivial query with a single binary relation $R$. Outside of this limited setting, the connection of covering axis-aligned and rectilinear polygons to \texttt{BoxMinC} breaks. This is because the certificate of a query in this case is the smallest number of boxes that cover the complement of the output, using boxes from the relations. In this case, because the output is not yet computed, it is not known a priori which polytopes should be covered.


There are variants of covering polygons that are less directly related to our problems. Reference \cite{gudmundsson-levcopoulos} studies the more general problem of covering polygons with only obtuse interior angles, and provides approximation algorithms. Reference~\cite{levcopoulos-gudmundsson-square} studies covering the input polygon with squares instead of rectangles. For a survey of geometric covering and packing problems, including shapes beyond polytopes, we refer the reader to references~\cite{sphere-packings} and~\cite{survey-packing-covering}.

\subsection{Orderings in Matrices}
There are several problems related to ordering the rows and columns of boolean matrices to achieve different optimization goals. The closest to our work is the consecutive block minimization problem (CBMP)~\cite{cbmp-npc}. Our hardness results are based on a variant of CBMP, called 2 consecutive block minimization~\cite{2cbmp-nph}, which we review in Section~\ref{sec:nph}. There are two other ordering problems for boolean matrices, which are less related to our work: (i) the {\em consecutive ones property test} determines whether there is a column ordering such that each row has only one consecutive block of ones~\cite{cop-poly}; (ii)  the {\em doubly lexical ordering problem} finds a row and column ordering such that both rows and columns are in lexicographic order~\cite{doubly-lexical}. Both problems have polynomial time solutions.	


\subsection{Worst-Case and Beyond Worst-Case Join Algorithms}
\label{sec:join-algorithms}
A join algorithm is said to be \emph{worst-case optimal} if it runs in time $\tilde{O}(\AGM(Q))$, where the AGM bound~\cite{agm} is the worst-case upper bound on the number of output tuples for a query based on its shape and the number of input tuples. Examples of worst-case optimal join algorithms are Leapfrog Triejoin~\cite{leapfrog-triejoin}, NPRR~\cite{nprr}, and Generic Join~\cite{generic-join}. 
A survey on worst-case optimal join algorithms can be found in reference~\cite{worst-case-survey}.
There are several results that consider other properties of the query and provide worst-case  upper bounds on the size of query outputs that are better than the AGM bound. 
Olteanu and Z\'{a}vodn\'{y} \cite{factorized} show that worst-case sizes of queries in {\em factorized representations} can be asymptotically smaller than the AGM bound and provide algorithms that meet these factorized bounds. Joglekar and R\'{e} \cite{degree-based} developed an algorithm which provides degree-based worst-case results that assume knowledge of degree information for the values in the query. Similarly, references \cite{bounds-degree-constraints} and \cite{bounds-functional-dependencies} provide worst-case bounds based on information theoretical bounds that take into account, respectively, more general degree constraints and functional dependencies.

Several results go beyond worst-case bounds and are closer to the notion of instance optimality. The earliest example is Yannakakis' data-optimal algorithm \cite{yannakakis} for acyclic queries
that runs in time $O(N+Z)$. This was later
generalized to an algorithm~\cite{tree-clustering-schemes} for arbitrary queries which runs in time $\tilde{O}(N^{fhtw}+Z)$,
where $fhtw$ is the query's {\em fractional hypertree width}~\cite{fhtw}. The Minesweeper algorithm~\cite{minesweeper} developed the measure of {\em comparison certificate} $C_{comp}$  for comparison-based join algorithms, which captures the minimum number of comparisons needed to prove the output of a join query is correct. Minesweeper runs in time
$\tilde{O}(|C_{comp}|^{w+1}+Z)$, where $Z$ is the number of output tuples and $w$ is the query's treewidth. The Tetris algorithm~\cite{tetris}, which motivates our work, generalizes comparison certificates to the geometric notion of a box certificate, reviewed in Section~\ref{sec:introduction}. For every comparison certificate $C_{comp}$, there is a box certificate of size at most $|C_{comp}|$. In this sense,
box certificates are stronger than comparison certificates, and Tetris subsumes the certificate-based results of Minesweeper. Our results on finding box covers with small certificates and domain orderings with small box covers improve the bounds provided by Tetris.

\vspace{-10pt}
\section{Generating a Box Cover}
\label{sec:generating-boxes}

Since the runtime of Tetris depends on the certificate size of its input box cover, an important preprocessing step for the algorithm is to generate a box cover with a small certificate. Ideally, a system should generate a box cover that contains a certificate of minimum size, across all box covers. We defined this quantity as $C_{\Box}(Q)$ in Section~\ref{sec:introduction}. \kaleb{The following lemma states two facts about dyadic boxes that are crucial for our results and the results in reference~\cite{tetris}.}
\begin{lemma}
\label{lemma:dyadic}
(Propositions B.12 and B.14 \cite{tetris})
 Let $b$ be any dyadic box. Then there are $\tilde{O}(1)$ dyadic boxes which contain $b$.
Let $b'$ be any (not necessarily dyadic) box. Then $b'$ can be partitioned into a set of $\tilde{O}(1)$
disjoint dyadic boxes whose union is equal to $b'$.
\end{lemma}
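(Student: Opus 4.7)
The plan is to leverage the correspondence between dyadic boxes and tuples of binary prefixes. Since each attribute domain consists of $d$-bit integers with $d = O(\log N)$, any quantity of the form $d^k$ in which $k$ depends only on query parameters (e.g.\ an arity, or the total number of attributes $n$) counts as $\tilde{O}(1)$ under the paper's convention of hiding polylog$(N)$ together with $m$ and $n$. Both parts of the lemma then reduce to elementary prefix/interval combinatorics.

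For the first claim, I would observe that a dyadic box $b' = \langle s'_1, \ldots, s'_k\rangle$ contains $b = \langle s_1, \ldots, s_k\rangle$ exactly when, for each dimension $i$, the string $s'_i$ is a (not necessarily proper) prefix of $s_i$: a longer prefix selects fewer values, and the empty prefix $*$ selects the entire domain. Coordinate-wise this gives at most $|s_i| + 1 \le d + 1$ choices, so the total number of dyadic boxes containing $b$ is bounded by $(d+1)^k = \tilde{O}(1)$.

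For the second claim, I would first handle the one-dimensional case: any integer interval $[a, c] \subseteq \{0, \ldots, 2^d - 1\}$ is the disjoint union of at most $2d$ dyadic intervals. The clean way to see this is to consider the complete binary tree of depth $d$ whose internal nodes correspond to dyadic intervals via their root-to-node binary labels, and to walk simultaneously upward from the leaves $a$ and $c$, including at each level whichever sibling subtree lies entirely inside $[a, c]$. The walk visits at most two nodes per level, so it produces at most $2d$ pieces that are disjoint and whose union is exactly $[a,c]$. For a general box $b' = I_1 \times \cdots \times I_k$, I would apply the 1D decomposition to each $I_i$ to obtain $O(d)$ dyadic subintervals, and then form all Cartesian products across dimensions; the result is a disjoint dyadic partition of $b'$ of size $O(d^k) = \tilde{O}(1)$.

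The only real obstacle is the one-dimensional dyadic decomposition, which is folklore from segment and range trees but does require articulating the upward walk carefully (in particular verifying disjointness and that the union genuinely equals $[a,c]$, rather than overshooting at the endpoints). Everything else — the containment counting in part~1 and the lift to $k$ dimensions by Cartesian products in part~2 — is routine once the 1D statement is in hand.
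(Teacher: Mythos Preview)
The paper does not actually prove this lemma; it merely cites Propositions~B.12 and~B.14 of reference~\cite{tetris} and uses the result as a black box. Your argument is correct and is the standard one for these facts: dyadic containment is coordinate-wise prefix extension (giving at most $(d+1)^k$ superboxes), and the one-dimensional canonical segment-tree decomposition of a $d$-bit interval into at most $2d$ dyadic pieces lifts to $k$ dimensions by Cartesian products, yielding $O((2d)^k)=\tilde{O}(1)$ pieces. This is essentially what the cited propositions in~\cite{tetris} establish, so there is nothing to compare against within this paper itself.
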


Let a dyadic gap box $b$ for a relation $R$ be \emph{maximal} if $b$ cannot be enlarged in any of its dimensions and still remain a dyadic gap box, i.e., not include an input tuple of $R$. Generating a box cover with certificate size $\tilde{O}(C_{\Box}(Q))$ can be done by generating the set of all maximal dyadic gap boxes in the input relations. This is because: (1) any general box
can be decomposed into $\tilde{O}(1)$ dyadic boxes by Lemma~\ref{lemma:dyadic}, so decomposing a general box cover into a dyadic one can increase its certificate size by at most a factor of $\tilde{O}(1)$; and (2) expanding any non-maximal dyadic boxes to make them maximal can only decrease the size of the certificate.
We will show that given any query $Q$ with $N$ input tuples, we can generate all maximal dyadic gap boxes over all of the relations in $Q$ in $\tilde{O}(N)$ time. This also implies that the number of maximal dyadic boxes is $\tilde{O}(N)$.  Interestingly, this is not true for general gap boxes, of which there can be a super-linear number (see Appendix~\ref{sec:many-general-boxes} for an example).

Algorithm~\ref{alg:gamb} shows the pseudocode for our algorithm {\em GAMB} that
\textbf{g}enerates \textbf{a}ll \textbf{m}aximal dyadic gap \textbf{b}oxes for a relation $R$ in $\tilde{O}(N)$ time. GAMB loops over each dyadic box $b$ covering each tuple $t$ in $R$, explores boxes that are adjacent to $b$ (which may or may not be gap boxes) and inserts these into a set $B$. Then it subtracts the set of all dyadic boxes covering any tuples from $B$ to obtain a set of gap boxes. As we argue, this set contains every maximal dyadic gap box (and possibly some non-maximal ones).
To generate all maximal boxes for a query $Q=(\mathcal{R},\mathcal{A})$, we can simply iterate over each $R \in\mathcal{R}$ and invoke GAMB. 	

\begin{algorithm}[t]

\caption{GAMB($R$): Generates all maximal dyadic gap boxes of $R$.}
\label{alg:gamb}
\begin{algorithmic}[1]
\State $B:=\emptyset$, $\bar{B}:=\emptyset$
\For{$t\in R$}
  \For{every dyadic box $b$ such that $t\in b$}\label{gamb:second:loop}
  \State $\bar{B}:= \bar{B}\cup\{b\}$
    \For{$A\in\attr(R)$ such that $b.A\neq *$} \label{gamb:third:loop}
      \State Let $b'$ be the box when the last bit of $b.A$ is flipped  \label{gamb:third:loop-inside}
      \State $B:= B\cup\{b'\}$
    \EndFor
  \EndFor
\EndFor
\State\Return $B\setminus\bar{B}$  \label{gamb:set-difference}
\end{algorithmic}
\end{algorithm}

\begin{theorem}
\label{thm:gamb}
GAMB generates all maximal dyadic gap boxes of a relation $R$ in $\tilde{O}(N)$ time.
\end{theorem}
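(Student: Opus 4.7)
The plan is to prove two things: a runtime bound of $\tilde{O}(N)$, and that the output $B \setminus \bar{B}$ contains every maximal dyadic gap box (soundness—that everything returned is a gap box—is also required but essentially immediate from the definitions).

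For the runtime, I would argue as follows. The outer loop runs $N$ times, once per tuple $t \in R$. The middle loop on line~\ref{gamb:second:loop} iterates over all dyadic boxes containing $t$; identifying $t$ with its unit dyadic box and invoking Lemma~\ref{lemma:dyadic}, this count is $\tilde{O}(1)$. The innermost loop on line~\ref{gamb:third:loop} runs at most $n_R = \tilde{O}(1)$ times, and within it only $\tilde{O}(1)$ work is done to flip a bit and insert into a hash set. Hence $|B|,|\bar{B}| = \tilde{O}(N)$, and the final set difference on line~\ref{gamb:set-difference} also takes $\tilde{O}(N)$ time using hashing. This gives the overall $\tilde{O}(N)$ bound.

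For soundness, I would note that $\bar{B}$ is precisely the set of dyadic boxes that contain at least one tuple of $R$, since the second loop enumerates every dyadic super-box of every tuple. Therefore every box in $B \setminus \bar{B}$ is tuple-free, i.e., a gap box. The main content of the proof is the completeness direction: every maximal dyadic gap box $b^*$ lies in $B \setminus \bar{B}$. Since $b^*$ is a gap box, $b^* \notin \bar{B}$, so it suffices to show $b^* \in B$. Assuming $b^* \neq \langle *,\dots,*\rangle$ (the all-$*$ case corresponds to an empty relation, handled as an edge case), pick any attribute $A$ with $b^*.A \neq *$ and let $b^{**}$ be the sibling dyadic box obtained by flipping the last bit of $b^*.A$. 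By maximality, enlarging $b^*$ in dimension $A$ by dropping that bit would cover a tuple; since $b^*$ itself is tuple-free, this tuple must lie in $b^{**}$. Let $t$ be such a tuple; then $b^{**}$ is one of the dyadic boxes containing $t$ considered on line~\ref{gamb:second:loop}, and when the algorithm processes $A$ on line~\ref{gamb:third:loop-inside}, it flips the last bit of $b^{**}.A$, yielding exactly $b^*$, which is therefore inserted into $B$.

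The main obstacle I anticipate is making the completeness argument airtight around the notion of ``enlargement'' of a dyadic box—specifically, verifying that the only dyadic way to strictly enlarge $b^*$ in dimension $A$ is by removing the last bit of $b^*.A$, so that maximality implies $b^{**}$ contains a tuple. Once this correspondence between maximality and the existence of a ``witness tuple'' in a sibling box is established, the rest of the proof (soundness, runtime, and the edge case of an empty relation) is essentially bookkeeping. A minor additional observation worth recording is that $B \setminus \bar{B}$ may contain non-maximal dyadic gap boxes as well; this is harmless for the statement and, combined with the soundness argument, yields the claimed $\tilde{O}(N)$ bound on the number of maximal dyadic gap boxes as a corollary.
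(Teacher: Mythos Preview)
Your proposal is correct and follows essentially the same approach as the paper's proof: both argue completeness by picking an attribute $A$ with $b^*.A \neq *$, observing that maximality forces the dyadic parent (last bit dropped) to contain a tuple, hence the sibling $b^{**}$ contains a tuple $t$, and then tracing the algorithm's loops on $t$ to see $b^*$ is inserted into $B$; soundness and the runtime bound via Lemma~\ref{lemma:dyadic} are likewise identical. The only cosmetic differences are that the paper implements the set difference by sorting rather than hashing, and that you are slightly more careful in flagging the $\langle *,\ldots,*\rangle$ edge case for empty $R$, which the paper's proof silently elides.
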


\begin{proof}

Let $b'$ be a maximal dyadic gap box for $R$. Let $A$ be an attribute of $R$ for which $b'$
specifies at least one bit (so $b'.A\neq *$). Let $b$ be the dyadic box obtained from $b'$ by flipping the last
bit of $b'.A$. Since $b'$ is maximal, $b$ contains at least one tuple $t\in R$.
Since $b$ is a dyadic box containing $t$, some iteration of the for-loop on line~\ref{gamb:second:loop} will
reach box $b$. Then the for-loop on line~\ref{gamb:third:loop} at some iteration will loop over $A$ and generate exactly $b'$ on line~\ref{gamb:third:loop-inside}. Thus $b'$ is added to $B$ and since $b'$ is a gap box, GAMB will not add it to $\bar{B}$ (which only contains non-gap boxes). Therefore $b'$ will be in the output of GAMB. Note that the returned set does not contain any non-gap boxes of $R$, since every box which contains any tuple of $R$ is added to $\bar{B}$. 
The outer-most for loop has $N$ iterations. The for loop on line~\ref{gamb:second:loop} has  $\tilde{O}(1)$ iterations by Lemma~\ref{lemma:dyadic}.
The for-loop on line~\ref{gamb:third:loop} has $n$, so $\tilde{O}(1)$, iterations. 
Finally, the set difference on line~\ref{gamb:set-difference} can be done by sorting both $B$ and $\bar{B}$ and iterating lockstep through the sorted boxes. Therefore, the total runtime of GAMB is $\tilde{O}(N)$.
\end{proof}

By our earlier observation based on Lemma~\ref{lemma:dyadic}, running GAMB
as a preprocessing step is sufficient to generate a box cover with a certificate of size $\tilde{O}(C_{\Box}(Q))$. Combined with runtime upper bounds of Tetris from reference~\cite{tetris}, we can state the following corollary:

\begin{corollary}
\label{cor:gamb}
Given a database $D$ of relations with $N$ total tuples, in $\tilde{O}(N)$ preprocessing time, one can generate a box cover $\mathcal{B}$ such that running Tetris on $\mathcal{B}$ yields $\tilde{O}\big(\big(C_{\Box}(Q)\big)^{w+1}+Z\big)$ or
$\tilde{O}\big(\big(C_{\Box}(Q)\big)^{n/2}+Z\big)$ runtimes for any query $Q$ over $D$.
\end{corollary}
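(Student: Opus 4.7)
The plan is to run GAMB on each relation once as a preprocessing step, and then show that the resulting box cover $\mathcal{B}$ already contains a certificate within a polylogarithmic factor of $C_{\Box}(Q)$ for every query $Q$ over $D$. The claimed runtimes then follow by feeding $\mathcal{B}$ into the two Tetris variants recalled in Section~\ref{sec:preliminaries}.

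The preprocessing step and its runtime are immediate: set $\mathcal{B}$ to be the union, over all $R \in D$, of the output of GAMB($R$). By Theorem~\ref{thm:gamb}, each call runs in $\tilde{O}(N_R)$ time, where $N_R = |R|$, so the total preprocessing time is $\tilde{O}\bigl(\sum_R N_R\bigr) = \tilde{O}(N)$.

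The key step is to show $C_{\Box}(\mathcal{B}) = \tilde{O}(C_{\Box}(Q))$ for an arbitrary query $Q$ over $D$. I would start from an optimal certificate $\mathcal{C}^*$ of size $C_{\Box}(Q)$ drawn from some optimal (general) box cover for $Q$; each box of $\mathcal{C}^*$ is a gap box of a specific relation $R$. First, invoke Lemma~\ref{lemma:dyadic} to partition every box in $\mathcal{C}^*$ into $\tilde{O}(1)$ disjoint dyadic boxes, yielding a dyadic cover of the same output gaps of cardinality $\tilde{O}(C_{\Box}(Q))$, where each piece remains a dyadic gap box of its parent relation. Next, observe that every dyadic gap box $b$ of a relation $R$ is contained in some maximal dyadic gap box $b^\star$ of $R$: one can repeatedly pick a coordinate along which dropping the last bit keeps the box tuple-free, and since each such step strictly enlarges $b$ while at most $\tilde{O}(1)$ dyadic boxes contain $b$ (again by Lemma~\ref{lemma:dyadic}), this terminates at a maximal dyadic gap box, which by Theorem~\ref{thm:gamb} lies in GAMB($R$) $\subseteq \mathcal{B}$. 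Replacing each dyadic piece of the split cover by such a containing maximal box yields a subset of $\mathcal{B}$ of size $\tilde{O}(C_{\Box}(Q))$ whose extended boxes still cover all output gaps, so $C_{\Box}(\mathcal{B}) = \tilde{O}(C_{\Box}(Q))$.

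With this bound established, the two runtime guarantees follow by directly substituting $C_{\Box}(\mathcal{B}) = \tilde{O}(C_{\Box}(Q))$ into the Tetris-Preloaded and Tetris-LoadBalanced bounds $\tilde{O}(C_{\Box}(\mathcal{B})^{w+1} + Z)$ and $\tilde{O}(C_{\Box}(\mathcal{B})^{n/2} + Z)$ from Theorems 4.9 and 4.11 of~\cite{tetris}. The main obstacle I anticipate is the containment step: one needs to verify that enlarging a dyadic gap box while staying inside a single relation's complement can always be driven to maximality, and that swapping in the containing boxes preserves the covering of the output gaps — both of which ultimately rest on Theorem~\ref{thm:gamb} guaranteeing that GAMB outputs \emph{every} maximal dyadic gap box rather than only some convenient subset.
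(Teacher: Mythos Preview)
Your proposal is correct and follows essentially the same approach as the paper: it too argues the corollary from GAMB's $\tilde{O}(N)$ runtime (Theorem~\ref{thm:gamb}), the observation that decomposing an optimal certificate into dyadic pieces via Lemma~\ref{lemma:dyadic} and then enlarging each piece to a maximal dyadic gap box yields a certificate inside GAMB's output of size $\tilde{O}(C_{\Box}(Q))$, and the Tetris bounds from~\cite{tetris}. If anything, you spell out the enlargement-to-maximal step in more detail than the paper, which simply asserts that ``expanding any non-maximal dyadic boxes to make them maximal can only decrease the size of the certificate.''
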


One interpretation of this result is that in $\tilde{O}(N)$ preprocessing time, a system can generate {\em a single global index} that will make Tetris efficient on all possible join queries over a database $D$. In fact, using the bounds from reference~\cite{tetris}, these are the best bounds we can obtain up to an $\tilde{O}(1)$ factor when $Q$ is fixed, since $C_{\Box}(Q)$ is the minimum certificate size for any box cover of $Q$.
This is surprisingly achieved with the same index for all queries, so the $\tilde{O}(N)$ preprocessing cost need only be incurred once for a workload of any number of joins.
To improve on these bounds, we must modify $Q$ to reduce the 
box certificate size. We next explore domain orderings as a method to improve these bounds.

\section{Domain Ordering Problems}
\label{sec:domain-ordering}

We next study the $\DomOrBox$ problem. Given a query $Q$, our goal is to find the minimum size box cover possible under any {\em domain ordering} for $Q$ and to find the domain ordering $\sigma^*$ that yields this minimum possible box cover size. We begin by defining a domain ordering.

\begin{definition}[Domain ordering]
A \emph{domain
ordering} for a query $Q=(\mathcal{R},\mathcal{A})$ is a tuple of $|\mathcal{A}|$ permutations $\sigma=(\sigma_A)_{A\in\mathcal{A}}$
where each $\sigma_A$ is a permutation of $\dom(A)$.
\end{definition}

\begin{example}
Let $A$ and $B$ be attributes over 2-bit domains. Let $R(A, B)$ be the following relation presented under the default domain ordering $[00,01,10,11]$ for both $A$ and $B$:
\begin{center}
$R(A,B) = \big\{\langle 00,00\rangle, \langle 01,11\rangle, \langle 10,00\rangle, \langle 11, 11\rangle\big\}$
\end{center}
Consider the domain ordering $\sigma$ where $\sigma_A$$=$$\sigma_B$$=$$\{$$00  \mapsto 00$, $01 \mapsto 10$, $10 \mapsto 11$, $11 \mapsto 01$$\}$.
We write $\sigma$ as $\sigma_A=\sigma_B=[00,11,01,10]$ to indicate the new ``locations'' of the previous domain values in the new ordering.
Then $\sigma(R)$
denotes the following relation:
\begin{center}
$\sigma(R)(A,B) = \big\{\langle 00,00\rangle,\langle 10,01\rangle,\langle 11,00\rangle,\langle 01,01\rangle\big\}$
\end{center}
\end{example}

The choice of domain ordering can have a significant effect on box cover sizes and their certificates. We show in Section~\ref{sec:tetris-reordered} that \kaleb{given a query $Q$ over $n$ attributes,
we can construct an infinite family of queries over $n$ attributes which require arbitrarily large
box covers and certificates under a default domain ordering, but under another domain ordering, have box covers and certificates of the same size as $Q$.}
Our specific problem is this: 

\begin{definition}[$\DomOrBox$]
Let $K_{\Box}(\sigma(Q))$ be the minimum box cover size one can obtain for the query $\sigma(Q)$ obtained
from $Q$ by ordering the domains according to $\sigma$. Given a query $Q$, output
a domain ordering $\sigma^*$ such that
$K_{\Box}(\sigma^*(Q)) = \min_{\sigma}K_{\Box}(\sigma(Q))$.
\end{definition}


In Section~\ref{sec:nph}, we show that $\DomOrBox$ is NP-hard.
In Section~\ref{sec:domain-ordering-approx}, we present ADORA, an approximation algorithm for
$\DomOrBox$. Section~\ref{sec:tetris-reordered} combines ADORA, GAMB, and Tetris
in an algorithm we call {\em TetrisReordered}, which has new beyond worst-case bounds.
In Section~\ref{sec:tetris-reordered} we also present infinite classes of queries for
which TetrisReordered
runs unboundedly faster than the versions of Tetris
from reference~\cite{tetris}. 

\subsection{\texorpdfstring{$\DomOrBox$}{DomOrBoxMinB} is NP-hard}
\label{sec:nph}

Our reduction is from the \emph{2 consecutive block minimization problem (2CBMP)}
on boolean matrices~\cite{2cbmp-nph}.\footnote{In this section, we use the informal convention of discussing the NP-hardness of minimization problems. Since NP-hardness is  defined for decision problems, when we state that a minimization problem like 2CBMP is NP-hard, we are implicitly referring to the decision problem which takes as input an additional positive integer $k$ and accepts if and only if the minimum value of the objective function is at most $k$.}
In a boolean matrix $M$, a \emph{consecutive block} is a maximal consecutive run of 1-cells in a single row of $M$,
which is bounded on the left by either the beginning of the row or a 0-cell, and bounded on the right by
either the end of the row or a 0-cell.
We use $\cb(M)$ to denote the total number of consecutive blocks in $M$ over all rows. 
Let $M$ be a boolean matrix 
stored as a 2D dense array, each row of which contains at most 2 1-cells. 2CBMP is the problem of finding an ordering $\sigma_c^*$ on the columns of $M$ such that
$\cb(\sigma_c^*(M)) = \min_{\sigma_c}\cb(\sigma_c(M))$.
2CBMP was shown to be NP-hard in reference~\cite{2cbmp-nph}.


\begin{theorem}
\label{thm:nph}
$\DomOrBox$ is NP-hard.
\end{theorem}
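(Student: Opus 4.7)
The plan is to give a polynomial-time reduction from 2CBMP to the decision version of $\DomOrBox$. Given a 2CBMP instance $M$ (an $m \times n$ boolean matrix with at most two $1$-cells per row), I would construct a query $Q$ on two attributes $A, B$ with $\dom(A) = \{1, \ldots, m\}$ and $\dom(B) = \{1, \ldots, n\}$, together with $m$ binary relations $R_1, \ldots, R_m$, where $R_i$ contains the tuples $\{(i, j) : M_{ij} = 0\}$. Each $R_i$ therefore uses only the single $A$-value $i$, and the whole construction is polynomial in $|M|$.

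The key analytical step is to show that $K_{\Box}(\sigma(Q)) = 2m - 2 + \cb(\sigma_B(M))$ for every domain ordering $\sigma = (\sigma_A, \sigma_B)$. For a single $R_i$, all tuples lie in the row $A = \sigma_A(i)$, so the complement of $R_i$ in $\dom(A) \times \dom(B)$ consists of every cell outside that row together with the $1$-cells of row $i$ within that row. A natural box cover uses one rectangle for the rows above $\sigma_A(i)$, one for the rows below, and one horizontal rectangle per maximal $1$-block in row $i$ under $\sigma_B$, giving $2 + \cb_i(\sigma_B(M))$ boxes at an interior position (and $1 + \cb_i$ at positions $1$ or $m$). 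A finite case analysis should rule out any improvement by tall rectangles that pass through the target row---each such rectangle is confined to a single $1$-block column range, and any mix of tall and horizontal strategies still requires at least $2 + \cb_i$ boxes. Because boxes contributed by different $R_i$'s always differ (they either lie in different target rows or span different $A$-ranges among the big above/below rectangles), the global minimum is $\sum_i (2 + \cb_i) - 2 = 2m - 2 + \cb(\sigma_B(M))$, where the $-2$ comes from whichever two relations occupy the extreme $A$-positions.

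Since the additive constant $2m - 2$ is independent of $\sigma$, $(M, k')$ is a YES-instance of 2CBMP if and only if $(Q, k' + 2m - 2)$ is a YES-instance of $\DomOrBox$, so the NP-hardness of 2CBMP~\cite{2cbmp-nph} transfers directly. The hardest part will be the optimality argument for the per-$R_i$ box cover: I will need to verify both that no alternative strategy beats $2 + \cb_i$, and that boxes contributed by distinct $R_i$'s cannot coincide when accumulated into $\mathcal{B} = \bigcup_i \mathcal{B}_{R_i}$. This decoupling relies critically on the $A$-coordinate to disambiguate the relations; the analogous construction with $m$ unary relations over a single attribute $B$ would fail whenever $M$ has identical rows, because the resulting unary relations would then be identical and their box covers could collapse under set union.
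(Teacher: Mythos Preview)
Your reduction has a genuine gap in the lower-bound direction. The claim that ``boxes contributed by different $R_i$'s always differ'' only describes the \emph{particular} cover you construct (above/below/blocks); it does not rule out a \emph{different} collection of per-relation covers whose union is strictly smaller. Since $K_{\Box}(\sigma(Q))$ is the size of the set $\mathcal{B}=\bigcup_i \mathcal{B}_{R_i}$, an adversary is free to pick each $\mathcal{B}_{R_i}$ non-minimally so that many boxes are shared across relations.

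Concretely, take $m\ge 3$ rows with no $1$-cells at all. Your formula predicts $K_{\Box}(\sigma(Q))=2m-2$. But with the $m$ single-row boxes $b_r=\{r\}\times[1,n]$, one may set $\mathcal{B}_{R_i}=\{b_r:r\neq\sigma_A(i)\}$; each $\mathcal{B}_{R_i}$ is a valid cover of $\bar R_i$ (every box avoids row $\sigma_A(i)$), and the union has size exactly $m$. More damagingly, when rows share $1$-columns the ``block'' boxes can also be shared: if all even-indexed rows have $1$-cells in columns $\{1,3\}$ and all odd rows in $\{2,4\}$, then under the column order $[1,3,2,4]$ the two tall boxes $[1,m]\times[1,2]$ and $[1,m]\times[3,4]$ serve as the block-boxes for \emph{every} relation simultaneously, giving $|\mathcal{B}|\le m+2$, whereas $\cb(\sigma_B(M))=m$. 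So neither $2m-2+\cb$ nor the tempting repair $m+\cb$ is a valid lower bound, and the equivalence with 2CBMP breaks.

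This is exactly why the paper works with a \emph{single} relation: with one relation there is no inter-relation sharing, and an ordinary independent-set argument lower-bounds the cover. The price is that one must build gadgetry (padding rows and columns) inside that one relation to force the minimum cover to track $\cb$, which is what the six transformation steps accomplish. Your multi-relation construction would need a comparable mechanism that prevents box sharing across the $R_i$'s; the $A$-coordinate alone does not provide it, because a box need not be confined to a single $A$-row to be a gap box for $R_i$.
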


\begin{proof}
We focus on the special case where $Q$ contains a single relation $R(A,B)$ over exactly 2 attributes and show that $\DomOrBox$ is NP-hard even in this case. 
 This implies $\DomOrBox$ is NP-hard for any number of attributes and relations, since one can duplicate $R$ to another relation $S$ with the same schema,
and extend $R$ and $S$ to a third attribute $C$, taking $R'=R\times\dom(C)$. The ordering that solves $\DomOrBox$ on $R'\bowtie S'$ (a trivial intersection query) also minimizes the box cover size for  $R$.
For the purposes of the proof, we model $R$ as a boolean matrix $M'$, with a row for each value in $\dom(B)$
and a column for each value in $\dom(A)$.
Each cell of the matrix corresponds to a possible tuple in $\dom(A)\times\dom(B)$. The matrix
$M'$ contains a 0-cell in column $i$ and row $j$ if the tuple $t=\langle i,j\rangle\in R$, and a 1-cell otherwise.
This means that a box cover $\mathcal{B}$ for $R$ corresponds directly to a set of rectangles which cover all of the 1-cells
of $M'$, and vice-versa. 
Readers can assume $M'$ is given to  $\DomOrBox$ as a  dense matrix or a list of tuples, i.e., ($i$, $j$) indices for the 0 cells. 

Let $M$ be an $n\times m$ boolean matrix input to 2CBMP.
We construct a $(4n)\times(m+2n)$ matrix $M'$ for input to
$\DomOrBox$. 
For each row $r_i$ of $M$, we create 4 rows in $M'$: $r_{i,1},r_{i,2},p_{i,1}$, and $p_{i,2}$.
 $r_{i,1}$ and $r_{i,2}$ are duplicates of the original row $r_i$, and $p_{i,1}$ and $p_{i,2}$ are the {\em padding rows} of $r_i$.  We also add 2 {\em padding columns} that contain 1-cells in the 4 rows of $r_i$ and $2n$$-$$2$ columns that contain only 0-cells for the 4 rows for of $r_i$. Let $S_i$ be the set of columns with 1-cells in row $r_i$ of $M$.
Let $e_S$ be the row vector of length $m+2n$ with value 1 on all indices
in $S\subseteq [m+2n]$, and 0 everywhere else. The new rows are defined as:
(i) $p_{i,1}$ = $e_{\{m+2i-1\}}$;
(ii) $r_{i,1}$  $= e_{S_i\cup\{m+2i-1\}}$;
(iii) $r_{i,2}$  $= e_{S_i\cup\{m+2i\}}$; and
(iv) $p_{i,2}$ $= e_{\{m+2i\}}$.
We insert these rows in the (i)-(iv) order, for $r_1, ..., r_n$, and refer to this order as the {\em default row ordering} of $M'$. We refer to the column ordering of $M'$ after this transformation as the {\em default column ordering} of $M'$.  
An example transformation from $M$ to $M'$ is shown in Figure \ref{fig:nph-transform}.

\begin{figure}
\vspace{-20pt}
\hspace{30pt}
\scalebox{0.8}{\vbox{
\begin{center}
$M=
\begin{array}{c}
r_1 \\ r_2 
\end{array}
\begin{bmatrix}
1 & 0 & 1 & 0 \\
0 & 0 & 1 & 1 
\end{bmatrix}$
\hspace{20pt}
$M'=
\begin{array}{c}
p_{1,1} \\ r_{1,1} \\ r_{1,2} \\ p_{1,2} \\
p_{2,1} \\ r_{2,1} \\ r_{2,2} \\ p_{2,2} 
\end{array}
\left[\begin{array}{cccc|cc|cc}
0 & 0 & 0 & 0 & \tc 1 & 0 & 0 & 0 \\
\tc 1 & 0 & \tc 1 & 0 & \tc 1 & 0 & 0 & 0 \\
\tc 1 & 0 & \tc 1 & 0 & 0 & \tc 1 & 0 & 0 \\
0 & 0 & 0 & 0 & 0 & \tc 1 & 0 & 0 \\
\hline
0 & 0 & 0 & 0 & 0 & 0 & \tc 1 & 0 \\
0 & 0 & \tc 1 & \tc 1 & 0 & 0 & \tc 1 & 0 \\
0 & 0 & \tc 1 & \tc 1 & 0 & 0 & 0 & \tc 1 \\
0 & 0 & 0 & 0 & 0 & 0 & 0 & \tc 1 \\
\end{array}\right]$
\end{center}
}}
\vspace{-10pt}
\caption{An example of the 2CBMP input matrix $M$ and its corresponding $M'$ matrix.}
\label{fig:nph-transform}
\vspace{-10pt}
\end{figure}

To prove this theorem, it suffices to prove that there exists an ordering $\sigma_c$ on
the columns of $M$ such that $\cb(\sigma_c(M))\leq k$ if and only if there exist orderings
$\sigma'=(\sigma_r'$, $\sigma_c')$ on the rows and columns of $M'$ such that $\sigma'(M')$ admits a
box cover of size $\le k+2n$.
Proving one direction of this claim is simple.
If there exists an ordering $\sigma_c$ on the columns of $M$ such that $\cb(\sigma_c(M)) \le k$,
then set $\sigma_r'$ equal to the default row ordering of $M'$.
Also, set the last $2n$ columns in $\sigma_c'$ equal to the default column ordering of the last $2n$
columns of $M'$. Then, set the first $m$ columns in $\sigma_c'$ equal to $\sigma_c$.
Then, the 1-cells in the first $m$ columns of $\sigma'(M')$ can be covered by at most $k$
boxes, and the 1-cells in the last $2n$ columns can be covered by $2n$ boxes,
for a total box cover size of at most $k+2n$.

Proving the converse is significantly more involved.
Let $\sigma'=(\sigma_r'$, $\sigma_c')$ be an ordering on the rows and columns
of $M'$ such that $\sigma'(M')$ admits a box cover $B$ of size $\le k+2n$.
We start with two definitions.
Two rows $r_{i,j}$ and $r_{k,\ell}$ in $M'$ ($i,k\in[n]$ and $j,\ell\in\{1,2\}$) are \emph{equivalent}
if $r_i$ and $r_k$ are equal rows in $M$ (ie. $r_i$ and $r_k$ have 1-cells in the same columns in $M$).
A \emph{run} of equivalent rows is a sequence $E$ of one or more $r_{i,j}$ rows which are consecutive
in $\sigma_r'$ such that all rows in $E$ are equivalent to one another.
We show a sequence of 6 steps that transform $\sigma'$ to match the default row ordering and except in the first $m$ columns also the default column ordering.
For each step, we prove that we can reorder $\sigma'$ without increasing the number of boxes such that
a claim is true of the reordered $\sigma'(M')$,
assuming that all of the previous claims hold. The proofs of these claims are provided in Appendices~\ref{app:step1}-\ref{app:step6}. 
\begin{enumerate}
\item Every $r_{i,j}$ row can be made adjacent to some equivalent $r_{k,\ell}$ row. (App.~\ref{app:step1})
\item Every run of equivalent $r_{i,j}$ rows can be made to have even length. (App.~\ref{app:step2})
\item Every run of equivalent $r_{i,j}$ rows can be made to have length 2. (App.~\ref{app:step3})
\item The padding rows $p_{i,j}$ can be made adjacent to their matching $r_{i,j}$ rows. (App.~\ref{app:step4})
\item The row order $\sigma_r'$ can be made to exactly match the default row order of $M'$. (App.~\ref{app:step5})
\item The column order $\sigma_c'$ can be made to exactly match the default column order of $M'$ on the last $2n$ columns. (App.~\ref{app:step6})
\end{enumerate}
\kaleb{Two 1-cells $c_1$, $c_2$ are \textit{independent} in $\sigma'(M')$ if there is no box containing $c_1$, $c_2$ that contains only 1-cells. An \textit{independent set} is a set of pairwise independent 1-cells. An independent set in $\sigma'(M')$ of size $S$ implies that the minimum box cover size of $\sigma'(M')$ is at least $S$. We will proceed by constructing a sufficiently large independent set in $\sigma'(M')$.}
After the above 6 steps, $M'$ and $\sigma'(M')$ differ only by the ordering of the first $m$ columns.
In $\sigma'(M')$, the last $2n$ columns contain an independent set of size $2n$, by taking the single
1-cell from each of the $p_{i,j}$ rows. These $2n$ 1-cells are independent from all 1-cells in the
first $m$ columns of $\sigma_c'$.
Let $\sigma_c$ be the ordering of the first $m$ columns in $\sigma_c'$.
We claim the first $m$ columns contain an independent set of size $\cb(\sigma_c(M))$.
First, any two 1-cells in separate 4-row units are independent from one another, because the padding rows
between them contain only 0-cells on the first $m$ columns. If a row of $\sigma_c(M)$ has only one consecutive block, add a 1-cell from the corresponding 4-row unit to the independent set. If a row of $\sigma_c(M)$ has
two consecutive blocks, there are two 1-cells in the first $m$ columns of the corresponding 4-row unit
which are independent from one another. Add both to the independent set. Combining the independent sets
from the first $m$ columns and the last $2n$ columns, we get an independent set of size $\cb(\sigma_c(M))+2n$. Therefore any box cover of $\sigma'(M')$ has at least $\cb(\sigma_c(M))+2n$ boxes.
We assumed $\sigma'(M')$ has a box cover of size $\le k+2n$, which implies
$\cb(\sigma_c(M))+2n \leq k + 2n$, so $\cb(\sigma_c(M)) \leq k$, completing the reduction.
\end{proof}

\subsection{Approximating \texorpdfstring{$\DomOrBox$}{DomOrBoxMinB}}
\label{sec:domain-ordering-approx}

In this section,
we provide a $\tilde{O}(N)$-time approximation algorithm for $\DomOrBox$.
Section~\ref{sec:hyperplanes}
develops some machinery necessary to prove our approximation ratio, and Section~\ref{sec:domain-ordering-approx-alg}
presents our approximation algorithm, ADORA. Section~\ref{sec:tetris-reordered} combines
ADORA, GAMB, and Tetris to state new beyond worst-case bounds for join processing.

\subsubsection{Dividing Relations into Hyperplanes}
\label{sec:hyperplanes}

In the simplest case, suppose the best domain ordering $\sigma^*$ for $Q$ 
yields a minimum box cover of size 1.
Then there is a single gap box $b$ in some relation $\sigma^*(R)$
such that $\mathcal{B}=\{b\}$ forms a box cover for $\sigma^*(Q)$.
Fix an arbitrary attribute $A\in\attr(R)$.
We can partition $\dom(A)$ into two sets:
values which are in the $A$-range spanned by $b$, and values which are not.
Consider the domain ordering $\sigma_A$ obtained by placing all
the domain values spanned by $b$ first (in any order), followed by all other values. Doing this for each
$A\in\mathcal{A}$ yields a domain ordering $\sigma$ which recovers the box $b$ to attain a box cover of size 1.
Intuitively, any domain values for $A$ which lie in the span of the same set of boxes in the minimum box cover should be placed
next to one another.
This can be generalized to an approximation algorithm
for any minimum box cover size.
We begin with definitions needed to formalize this approach.

\begin{definition}[$A$-hyperplane]
Let $R\in\mathcal{R}$ be over a set of attributes $\attr(R)$. Let $A\in\attr(R)$ and $a\in\dom(A)$. The $A$-hyperplane of $R$ defined by $a$ is the relation $H(R,A,a)=\pi_{\attr(R)\setminus\{A\}}(\sigma_{A=a}(R))$ 
\kaleb{if $|\attr(R)|>1$ and $H(R,A,a)=\{|\sigma_{A=a}(R)|\}$ if $|\attr(R)|=1$.}
\end{definition}

Let $n_R=|\attr(R)|$. The $A$-hyperplane defined by $a$ in $R$ can be thought of as the ``slice'' of the $n_R$-dimensional
space occupied by $R$ containing only the $(n_R$$-$$1)$-dimensional subspace where the attribute $A$ is
fixed to the value $a$. This is a natural generalization of ``rows'' and ``columns'' which were useful for
discussing 2-dimensional relations in Section~\ref{sec:nph}.

\begin{definition}[Equivalent domain values]
Let $Q=(\mathcal{R},\mathcal{A})$ be a query, let $A\in\mathcal{A}$, and let $a_1,a_2\in\dom(A)$.
$a_1$ and $a_2$ are \emph{equivalent in $Q$} if for all $R\in\mathcal{R}$ we have $H(R,A,a_1)=H(R,A,a_2)$.
In this case, we write $a_1\sim a_2$.
\end{definition}

For $a\in\dom(A)$, the subset of domain values $\Eq(a)=\{a'\in\dom(A):a\sim a'\}\subseteq\dom(A)$ is
called the \emph{equivalence class} of $a$. The equivalence classes for all of the values in $\dom(A)$
form a partition of $\dom(A)$. 
The next lemma bounds the number of these equivalence classes as a function of the minimum box cover size of any domain ordering $\sigma$.

\begin{lemma}
\label{lemma:few-boxes-few-planes}
Let $\sigma$ be a domain ordering for $Q=(\mathcal{R},\mathcal{A})$. 
Let $A$ be an attribute in $\mathcal{A}$ and $h$ be the number of equivalence classes of the values in $\dom(A)$.  Then $h \le 2\cdot K_{\Box}(\sigma(Q))+1$.
\end{lemma}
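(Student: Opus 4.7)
The plan is to exhibit a partition of $\dom(A)$ (ordered by $\sigma_A$) into at most $2\cdot K_{\Box}(\sigma(Q))+1$ contiguous intervals, each of which is contained in a single equivalence class of $\sim$; this will immediately imply $h\le 2\cdot K_{\Box}(\sigma(Q))+1$. Let $K=K_{\Box}(\sigma(Q))$ and fix a minimum box cover $\mathcal{B}=\bigcup_{R\in\mathcal{R}}\mathcal{B}_R$ for $\sigma(Q)$ with $|\mathcal{B}|=K$. For each $b$ belonging to some $\mathcal{B}_R$ with $A\in\attr(R)$, let $[\ell_b,r_b]$ denote its $A$-span under $\sigma_A$. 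Such a $b$ introduces at most two positions in the $\sigma_A$-ordered $\dom(A)$ at which membership in its $A$-span can change (just before $\ell_b$ and just after $r_b$), so collecting these positions across the at most $K$ relevant boxes gives at most $2K$ separation points and partitions $\dom(A)$ into at most $p\le 2K+1$ contiguous intervals $I_1,\ldots,I_p$.

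The key step is to show that if $a_1, a_2\in I_j$ for some $j$, then $a_1\sim a_2$, i.e., $H(R,A,a_1)=H(R,A,a_2)$ for every $R\in\mathcal{R}$. For $R$ with $A\notin\attr(R)$ this holds trivially because the hyperplane does not depend on $a$. For $R$ with $A\in\attr(R)$, I will show that for every candidate tuple $t$ over $\attr(R)\setminus\{A\}$, we have $(a_1,t)\in\sigma(R)$ iff $(a_2,t)\in\sigma(R)$. Since $\mathcal{B}_R$ covers the complement of $\sigma(R)$, it suffices to verify that for every $b\in\mathcal{B}_R$, box $b$ contains $(a_1,t)$ iff it contains $(a_2,t)$. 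By the construction of $I_j$, either both $a_1, a_2$ lie in the $A$-span of $b$ or neither does; in the former case the two containments agree because the remaining coordinates coincide, and in the latter case both containments fail.

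Since each interval $I_j$ is contained in a single equivalence class of $\sim$, we conclude $h \le p \le 2K+1$, which is the claimed bound. The only subtlety is the boundary count in the first paragraph: a box may span the whole of $\dom(A)$ or reach either end, in which case it contributes fewer than two separation points, but this only strengthens the inequality; boxes from relations $R$ with $A\notin\attr(R)$ contribute no separation points and do not need to be counted. I expect the main conceptual obstacle to be cleanly justifying the ``factor of two per box'' count and the reduction of equivalence to matching box-membership patterns along the $A$-axis; the rest is bookkeeping.
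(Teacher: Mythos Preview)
Your proof is correct and follows essentially the same idea as the paper's: both arguments use that each box in a minimum cover contributes at most two $A$-axis boundary positions, which bounds the number of equivalence classes by $2K+1$. The paper phrases this by mapping each adjacent ``switch'' $a_1\not\sim a_2$ in $\sigma_A$ to a distinct box face, whereas you directly partition $\dom(A)$ by the box endpoints and verify that no switch can occur inside an interval; the bookkeeping differs slightly but the content is the same.
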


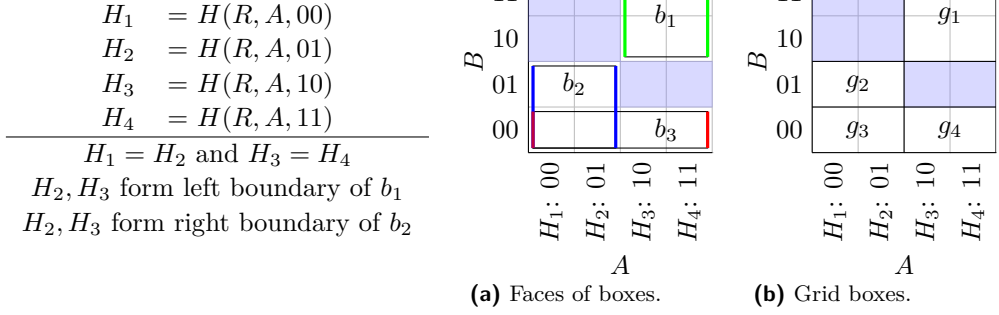
\begin{figure*}

\vspace{-15pt}
\begin{center}
\begin{tabular}{c@{\hskip 10pt}c@{\hskip 10pt}c}
\begin{tabular}{c}
\begin{tabular}{rl}
$H_1$ & = $H(R, A, 00)$ \\
$H_2$ & = $H(R, A, 01)$ \\
$H_3$ & = $H(R, A, 10)$ \\
$H_4$ & = $H(R, A, 11)$
\end{tabular} \\ \hline
$H_1=H_2$ and $H_3=H_4$ \\
$H_2,H_3$ form left boundary of $b_1$ \\
$H_2,H_3$ form right boundary of $b_2$
\end{tabular}
&
\begin{tikzpicture}[baseline=(current bounding box.center)]
\begin{axis}[axis lines=left,xlabel=$A$,ylabel=$B$,
  xmin=0,xmax=4,
  ymin=0,ymax=4,
  tickwidth=0,
  xticklabel interval boundaries,yticklabel interval boundaries,
  xtick={0,1,2,3,4},ytick={0,1,2,3,4},
  xticklabels={$H_1$: 00, $H_2$: 01,$H_3$: 10,$H_4$: 11},yticklabels={00,01,10,11},
  xticklabel style={rotate=90},
  y label style={at={(axis description cs:0.11,0.5)},anchor=north},
  x label style={at={(axis description cs:0.5,-0.5)},anchor=south},
  grid=both,axis line style={-},
  height=4cm,width=4cm]
\filldraw[\tuplecolour,opacity=\tupleopacity] (axis cs:0,2) rectangle (axis cs:2,4);
\filldraw[\tuplecolour,opacity=\tupleopacity] (axis cs:2,1) rectangle (axis cs:4,2);
\draw[\boxcolour] (axis cs:2.1,2.1) rectangle (axis cs:3.9,3.9);
\draw[\boxcolour] (axis cs:0.1,0.1) rectangle (axis cs:1.9,1.9);
\draw[\boxcolour] (axis cs:0.1,0.1) rectangle (axis cs:3.9,0.9);
\node at (axis cs:3,3) {$b_1$};
\node at (axis cs:1,1.5) {$b_2$};
\node at (axis cs:3,0.5) {$b_3$};
\draw[green,very thick] (axis cs:2.1,2.1) -- (axis cs:2.1,3.9);
\draw[green,very thick] (axis cs:3.9,2.1) -- (axis cs:3.9,3.9);
\draw[blue,very thick] (axis cs:0.1,0.1) -- (axis cs:0.1,1.9);
\draw[blue,very thick] (axis cs:1.9,0.1) -- (axis cs:1.9,1.9);
\draw[purple,very thick] (axis cs:0.1,0.1) -- (axis cs:0.1,0.9);
\draw[red,very thick] (axis cs:3.9,0.1) -- (axis cs:3.9,0.9);
\end{axis}
\end{tikzpicture}
&
\begin{tikzpicture}[baseline=(current bounding box.center)]
\begin{axis}[axis lines=left,xlabel=$A$,ylabel=$B$,
  xmin=0,xmax=4,
  ymin=0,ymax=4,
  tickwidth=0,
  xticklabel interval boundaries,yticklabel interval boundaries,
  xtick={0,1,2,3,4},ytick={0,1,2,3,4},
  xticklabels={$H_1$: 00, $H_2$: 01,$H_3$: 10,$H_4$: 11},yticklabels={00,01,10,11},
  xticklabel style={rotate=90},
  y label style={at={(axis description cs:0.11,0.5)},anchor=north},
  x label style={at={(axis description cs:0.5,-0.5)},anchor=south},
  grid=both,axis line style={-},
  height=4cm,width=4cm]
\filldraw[\tuplecolour,opacity=\tupleopacity] (axis cs:0,2) rectangle (axis cs:2,4);
\filldraw[\tuplecolour,opacity=\tupleopacity] (axis cs:2,1) rectangle (axis cs:4,2);
\draw[black] (axis cs:2,0) -- (axis cs:2,4);
\draw[black] (axis cs:4,0) -- (axis cs:4,4);
\draw[black] (axis cs:0,1) -- (axis cs:4,1);
\draw[black] (axis cs:0,2) -- (axis cs:4,2);
\draw[black] (axis cs:0,4) -- (axis cs:4,4);
\node at (axis cs:3,3) {$g_1$};
\node at (axis cs:1,1.5) {$g_2$};
\node at (axis cs:1,0.5) {$g_3$};
\node at (axis cs:3,0.5) {$g_4$};
\end{axis}
\end{tikzpicture} \\
&
\begin{subfigure}{85pt}
\caption{Faces of boxes.}
\label{fig:hyperplanes-1}
\end{subfigure}
&
\begin{subfigure}{85pt}
\caption{Grid boxes.}
\label{fig:hyperplanes-2}
\end{subfigure}
\end{tabular}
\end{center}
\vspace{-20pt}
\caption{An illustration of how $A$-hyperplane switches form the boundaries of the gap boxes of $R$ (left)
and how dividing $R$ into grid cells defined by hyperplane switches induces a box cover (right).}
\label{fig:hyperplanes}
\vspace{-5pt}
\end{figure*}

\begin{proof}
Let $A\in\mathcal{A}$ and let $a_1,a_2\in\dom(A)$ be such that $a_1$ directly precedes $a_2$ in $\sigma_A$ 
and $a_1\not\sim a_2$. We refer to the $a_1$, $a_2$ boundary as a ``switch'' along $A$. 
Observe that there are at least $h$$-$$1$ switches along $A$. This minimum is attained when the values in each equivalence class are placed in a single consecutive run in $\sigma_A$. 
Since $a_1\not\sim a_2$, there is some relation $R\in\mathcal{R}$ such that
$H_1=H(R,A,a_1)\neq H(R,A,a_2)=H_2$. 
Then there is some tuple $t$ which is in $H_1$ but not $H_2$ or vice versa. Assume w.l.o.g. that $t\in H_1$ and $t\not\in H_2$. 
Let $t_1 = \langle a_1, t\rangle$ and $t_2 = \langle a_2, t\rangle$ be the tuples that extend $t$ to attribute $A$ with values $a_1$ and $a_2$, respectively. 
This means that $t_1 \in R$ and
$t_2\not\in R$. 
Let $\mathcal{B}$ be a box cover for $\sigma(Q)$ with $K_{\Box}(\sigma(Q))$ boxes. Let $\mathcal{B}_{R}$ be the set of boxes in $\mathcal{B}$ that are from $R$ and cover the complement of $R$ (so $|\mathcal{B}_{R}| \le K_{\Box}(\sigma(Q))$). 
Let $b\in \mathcal{B}_R$ be a box
covering $t_2$ (and not $t_1$ since $b$ is a gap box).
\kaleb{In this context, a \textit{face} of $b$ along the $A$ axis is one of the two distinct portions of the boundary of $b$ contained in a hyperplane orthogonal to the $A$ axis that does not contain any interior points of $b$.}
Since $a_1$ and $a_2$ are adjacent in $\sigma_A$, one face of $b$ along the $A$ axis is the $(a_1, a_2)$ switch, i.e. one face of $b$ lies on the boundary between $H_1$ and $H_2$. Every box $b$ has exactly two faces along $A$, so there are $\leq 2K_{\Box}(\sigma(Q))$ faces of boxes in $\mathcal{B}$ along the $A$ axis. Two different switches cannot correspond to the same face of the same box.
As an example, Figure~\ref{fig:hyperplanes-1} shows the switches in attribute $A$ and the faces of gap boxes that these switches correspond to, which are highlighted in colour.
This completes the argument that each ($a_1$, $a_2$) switch corresponds to a distinct face of some box along the $A$ axis. There are at least $h$$-$$1$ switches and at most $2K_{\Box}(\sigma(Q))$ box faces, so $h \le 2K_{\Box}(\sigma(Q))$$+$$1$.
\end{proof}

\subsubsection{ADORA}
\label{sec:domain-ordering-approx-alg}
Lemma~\ref{lemma:few-boxes-few-planes} inspires an approximation algorithm for $\DomOrBox$.
Let $\sigma^*$ be the optimal domain ordering for
$\DomOrBox$ on $Q$.
Let $K=K_{\Box}(\sigma^*(Q))$ throughout this section. 
Algorithm~\ref{alg:adora} presents
the pseudocode for our \textbf{A}pproximate \textbf{D}omain \textbf{Or}dering \textbf{A}lgorithm (ADORA).
ADORA uses Algorithm~\ref{alg:order-attr} as a subroutine to produce an ordering $\sigma_A$ for $\dom(A)$ that contains each equivalence class in $\dom(A)$ as a consecutive run. 

\begin{theorem}
\label{thm:domain-ordering-approx}
Let $Q=(\mathcal{R},\mathcal{A})$ be a query and $\sigma^*$ be an optimal domain ordering for $\DomOrBox$ on $Q$.
Let $K=K_{\Box}(\sigma^*(Q))$.
Then ADORA produces a domain ordering $\sigma$ in $\tilde{O}(N)$ time
such that $K_{\Box}(\sigma(Q))=\tilde{O}(K^r)$,
where $r$ is the maximum arity of a relation in $\mathcal{R}$.
\end{theorem}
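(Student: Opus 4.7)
The plan is to argue that ADORA's ordering induces, for each relation $R$ of arity $n_R \leq r$, a grid partition of $R$'s ambient space whose cells are all either entirely inside $R$ or entirely inside its complement, and whose total count across all relations is $\tilde{O}(K^r)$. The gap cells of these grids will directly constitute the promised box cover.

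First, I would invoke Lemma~\ref{lemma:few-boxes-few-planes} applied to $\sigma^*$ to conclude that each attribute $A \in \mathcal{A}$ has at most $2K+1 = \tilde{O}(K)$ equivalence classes. Since the relation $a_1 \sim a_2$ is defined via equality of hyperplanes $H(R, A, a_1) = H(R, A, a_2)$, which is invariant under any relabelling of domain values, the partition $\dom(A) / {\sim}$ does not depend on the domain ordering. ADORA therefore only needs to order each $\dom(A)$ so that every equivalence class sits in a single consecutive block (this is precisely what Algorithm~\ref{alg:order-attr} does), cutting $\dom(A)$ into at most $2K+1$ intervals.

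Next, for each $R \in \mathcal{R}$ with attributes $A_1,\ldots,A_{n_R}$, I would take the Cartesian product of these intervals to obtain at most $(2K+1)^{n_R} \leq (2K+1)^r = \tilde{O}(K^r)$ axis-aligned grid cells. The key claim is that every such cell is uniform, i.e., either contained in $R$ or disjoint from $R$. To prove it, fix two tuples $t = (a_1,\ldots,a_{n_R})$ and $t' = (a_1',\ldots,a_{n_R}')$ in the same cell, so $a_i \sim a_i'$ for every $i$, and apply a hybrid argument that swaps coordinates one at a time: at step $i$, the equality $H(R, A_i, a_i) = H(R, A_i, a_i')$ guaranteed by $a_i \sim a_i'$ implies that $R$-membership is unchanged when $a_i$ is replaced by $a_i'$ in the current hybrid. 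After $n_R$ such swaps, $t \in R \iff t' \in R$. The grid cells disjoint from $R$ therefore form a valid box cover of $R$'s complement of size $\tilde{O}(K^r)$; taking the union over the $m = \tilde{O}(1)$ relations yields $K_{\Box}(\sigma(Q)) = \tilde{O}(K^r)$.

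For the runtime, I would observe that the equivalence classes of $\dom(A)$ can be computed in $\tilde{O}(N)$ time as follows: for each $R$ containing $A$, group the tuples of $R$ by their $A$-coordinate, canonicalise each resulting hyperplane by a lexicographic sort and hash on the remaining coordinates, and then, for each $a \in \dom(A)$, form the tuple of canonical labels over all $R \ni A$; two values are equivalent iff their label tuples agree, which can be detected by one final sort. Given the classes, reordering each domain so their values are consecutive is a linear-time bucketing step. The main obstacle is the hybrid argument for grid-cell uniformity, which is conceptually clean but relies on the subtle point that equivalence classes are a property of $\mathcal{R}$ alone and not of the chosen ordering, so the bound inherited from $\sigma^*$ via Lemma~\ref{lemma:few-boxes-few-planes} can legitimately be transported to the ordering produced by ADORA; once this is pinned down, the counting bound $(2K+1)^r$ and the $\tilde{O}(N)$ runtime both follow.
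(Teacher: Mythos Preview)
Your proposal is correct and follows essentially the same approach as the paper: bound the number of equivalence classes per attribute by $2K+1$ via Lemma~\ref{lemma:few-boxes-few-planes} applied to $\sigma^*$, use ADORA's ordering to make each class a contiguous interval, form the resulting $(2K+1)^{n_R}$ grid cells per relation, argue uniformity of each cell, and sum over the $m$ relations. The only cosmetic differences are that the paper proves cell uniformity via a path of unit steps between adjacent points inside the cell rather than your coordinate-wise hybrid swap (both rest on the same hyperplane-equality property), and the paper's runtime argument uses only sorting of the concatenated hyperplane arrays $\mathcal{T}[a]$ rather than hashing; your explicit remark that equivalence classes are ordering-invariant is a useful clarification the paper leaves implicit.
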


\begin{proof}
We defer the runtime analysis of ADORA to Appendix~\ref{app:adora-runtime} and prove the approximation ratio here.
We begin by arguing that given an attribute $A$, the ordering returned by Algorithm~\ref{alg:order-attr}  
places every equivalence class of $\dom(A)$ in one consecutive run. 
The for-loop on line~\ref{line:subroutinesecondfor} iterates over
each $a\in\dom(A)$ that appears in $Q$ and constructs an array $\mathcal{T}[a]$.  
$\mathcal{T}[a]$ is the result of appending the $A$-hyperplanes
$H(R,A,a)$ for each $R\in\mathcal{S}$ in a fixed order.
Line~\ref{line:subroutine:firstfor} sorts each relation lexicographically starting with $A$ (notice that the order $\phi$ is defined to place $A$ first).
These two facts ensure that after the for-loop on line~\ref{line:subroutinesecondfor} has finished, $\mathcal{T}[a_1]=\mathcal{T}[a_2]$
if and only if $a_1\sim a_2$.
The final sort of $D$ on line~\ref{line:subroutinefinalsort} sorts values of $\dom(A)$, say $a_i$ and $a_j$, according to the lexicographic order of $T[a_i]$ and $T[a_j]$, \kaleb{so the lists are compared item by item from start to end, with each item compared using attribute ordering $\phi$.} This ensures all $A$ values in the same equivalence class will be in one consecutive run in $\sigma_A$.

The output of ADORA is a domain ordering $\sigma$, which orders each attribute $A \in \mathcal{A}$ according to the $\sigma_A$ returned by Algorithm~\ref{alg:order-attr}. We next prove there exists a box cover for $\sigma(Q)$ of size $\tilde{O}(K^r)$.
Let $R\in\mathcal{R}$. Suppose $|\attr(R)|=n_R$ and note that $n_R\leq r$.
Let $A\in\attr(R)$.
Lemma~\ref{lemma:few-boxes-few-planes} states that $\dom(A)$ contains at most $2K+1$ equivalence classes, which we proved are placed consecutively in $\sigma_A$.
By definition, if $a_1\sim a_2$, then $H(R,A,a_1)=H(R,A,a_2)$. Therefore $\sigma_A$ consists of
a sequence of at most $2K+1$ consecutive runs of $A$-values where the values in each run have identical $A$-hyperplanes
in $R$. This holds for all $A\in\attr(R)$. The runs of identical hyperplanes partition the $n_R$-dimensional space of $\sigma(R)$ into
at most $(2K+1)^{n_R}$ many $n_R$-dimensional grid boxes. Each dimension of a grid box is formed by one of the (at most) $2K+1$ runs from one attribute. By construction, these grid boxes form a partition of the  $n_R$-dimensional space as each grid box is a distinct combination of equivalence classes for the attributes and the orderings returned by Algorithm~\ref{alg:order-attr} cover all the values in $\dom(A)$. 
Figure~\ref{fig:hyperplanes-2} demonstrates the grid boxes implied by the equivalence classes in the orderings of a relation. In the figure, there are two equivalence classes for attribute $A$ and three for $B$, dividing the relation into 6 grid boxes. 

We argue that each grid box is completely full of either gaps or tuples.
Let $t\in R$, let $g$ be the grid box containing $t$, and let $t'$ be another point in $g$. \kaleb{Two points $t_1$, $t_2$ are {\em adjacent} if for some attribute $A$, $t_1.A$ and $t_2.A$ are adjacent in $\sigma_A$.} Consider moving from $t$ to $t'$ through any sequence of adjacent points in $g$.
When we pass through a point we are moving from one $A$-hyperplane to an identical $A$-hyperplane for some attribute $A$. Thus every point along this path must also be a tuple in $R$. A similar argument for gaps implies that every point in a grid box that contains one gap must also be a gap. Since the grid boxes partition the domain of $R$, constructing one box for each gap grid box results in a box cover $\mathcal{B}_R$ for $R$.
Since there are at most $(2K+1)^{n_R}$ grid boxes, $|\mathcal{B}_R|\leq(2K+1)^{n_R}$.
We can construct such a box cover for each $R\in\mathcal{R}$ to obtain a box cover for $\sigma(Q)$ of size
$\sum_{R\in\mathcal{R}}(2K+1)^{n_R}\leq m(2K+1)^r=\tilde{O}(K^r)$, completing the proof of ADORA's approximation ratio. 
\end{proof}

\begin{algorithm}[t]
\caption{\textsc{ADORA}($Q=(\mathcal{R},\mathcal{A})$): Computes a domain ordering.}
\label{alg:adora}
\begin{algorithmic}[1]
\For{$A\in\mathcal{A}$}
  \State $\sigma_A :=$ \textsc{OrderAttr}($Q,A$)
\EndFor
\State\Return $\sigma=\{\sigma_A\}_{A\in\mathcal{A}}$
\end{algorithmic}
\end{algorithm}

\begin{algorithm}[t]
\begin{algorithmic}[1]
  \State $\phi :=$ any attribute ordering of $\mathcal{A}$ which places $A$ first
  \State $\mathcal{S} := \{R\in\mathcal{R}:A\in\attr(R)\}$, $D := \bigcup_{R\in\mathcal{S}}\pi_A(R)$, $\mathcal{T}:=\emptyset$ \label{line:D}
  \For{$R\in\mathcal{S}$}
    \State Sort $R$ lexicographically according to $\phi$ \label{line:subroutine:firstfor}
  \EndFor
  \For{$a\in D$} \label{line:subroutinesecondfor}
    \State $\mathcal{T}[a] := []$
    \For{$R\in\mathcal{S}$ in a fixed order}
      \State $\mathcal{T}[a]$.append($H(R,A,a)$)
    \EndFor
  \EndFor
  \State Sort $D$ by ordering $a_i$ and $a_j$ according to the lexicographic order of $\mathcal{T}[a_i]$ and  $\mathcal{T}[a_j]$ \label{line:subroutinefinalsort}
  \State\Return $\sigma_A=D$ (append $a \not \in D$ to $\sigma_A$ in arbitrary order)
\end{algorithmic}
\caption{\textsc{OrderAttr}($Q, A$): Groups equivalence classes for $A$ into consecutive runs.}
\label{alg:order-attr}
\end{algorithm}

Appendix~\ref{sec:adora-tight} shows that our analysis of ADORA's approximation factor is asymptotically tight by defining a family of queries over binary relations which have orderings with box covers of size $K$, whereas the orderings that ADORA returns require $\Omega(K^2)$ boxes.

\subsection{TetrisReordered}
\label{sec:tetris-reordered}

We next combine ADORA, GAMB, and Tetris in a new join algorithm we call {\em TetrisReordered} to obtain new beyond worst-case optimal results for join queries. Algorithm~\ref{alg:tetris-reordered} presents the pseudocode of TetrisReordered.
Corollary~\ref{cor:tetris-reordered-width} immediately follows from Theorems~\ref{thm:gamb} and~\ref{thm:domain-ordering-approx} from this paper, and Theorems~4.9 and~4.11 from reference~\cite{tetris}. 

\begin{corollary}
\label{cor:tetris-reordered-width}
Let $Q=(\mathcal{R},\mathcal{A})$ be a join query. Let $w$ be the treewidth of $Q$, $n=|\mathcal{A}|$, and $r$ the maximum
arity of a relation in $\mathcal{R}$. Let $\sigma^*$ be an optimal
solution to $\DomOrBox$ on $Q$ and let $K=K_{\Box}(\sigma^*(Q))$. TetrisReordered
computes $Q$ 
in $\tilde{O}(N+K^{r(w+1)}+Z)$ time by using Tetris-Reloaded
or in $\tilde{O}(N+K^{rn/2}+Z)$ time by using Tetris-LoadBalanced as a subroutine.
\end{corollary}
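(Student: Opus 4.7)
The plan is to analyze TetrisReordered as a pipeline of three stages and add up the cost of each. First, invoke ADORA on $Q$; by Theorem~\ref{thm:domain-ordering-approx} this takes $\tilde{O}(N)$ time and returns a domain ordering $\sigma$ such that $\sigma(Q)$ admits a box cover of size $\tilde{O}(K^r)$. Applying $\sigma$ to the tuples of each relation to materialize $\sigma(Q)$ can be done in $\tilde{O}(N)$ time by sorting under the permuted values, and the inverse $\sigma^{-1}$ can be precomputed on the same budget so that output tuples can later be translated back in $\tilde{O}(1)$ time each.

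Second, call GAMB on each relation of $\sigma(Q)$. Theorem~\ref{thm:gamb} guarantees a total runtime of $\tilde{O}(N)$ and produces a box cover $\mathcal{B}$ of $\sigma(Q)$. As argued in Section~\ref{sec:generating-boxes}, $\mathcal{B}$ contains every maximal dyadic gap box, and hence $C_{\Box}(\mathcal{B}) = \tilde{O}(C_{\Box}(\sigma(Q)))$. Since any certificate is a subset of a box cover, $C_{\Box}(\sigma(Q)) \le K_{\Box}(\sigma(Q))$, so combining with ADORA's guarantee yields $C_{\Box}(\mathcal{B}) = \tilde{O}(K^r)$.

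Third, run Tetris-Preloaded or Tetris-LoadBalanced on the box cover $\mathcal{B}$ for query $\sigma(Q)$. Theorems 4.9 and 4.11 of reference~\cite{tetris} give runtimes $\tilde{O}(C_{\Box}(\mathcal{B})^{w+1} + Z')$ and $\tilde{O}(C_{\Box}(\mathcal{B})^{n/2} + Z')$, respectively, where $Z'$ is the number of output tuples of $\sigma(Q)$. Note that reordering the domains does not change the treewidth or the number of attributes, and because each $\sigma_A$ is a bijection we have $Z' = Z$. Substituting $C_{\Box}(\mathcal{B}) = \tilde{O}(K^r)$ yields the bounds $\tilde{O}(K^{r(w+1)} + Z)$ and $\tilde{O}(K^{rn/2} + Z)$. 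Adding the $\tilde{O}(N)$ cost of the first two stages and the $\tilde{O}(Z)$ cost of translating each output tuple back through $\sigma^{-1}$ gives the stated overall bounds.

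There is no real obstacle here: the corollary is pure composition, and the only points that warrant a line of justification are that $\sigma$ preserves $w$, $n$, and $Z$, and that the $\tilde{O}$ notation absorbs all polylogarithmic factors in $N$, $m$, and $n$ so that chaining three $\tilde{O}$-time subroutines yields another $\tilde{O}$ bound.
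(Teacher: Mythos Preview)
Your proposal is correct and follows exactly the approach the paper takes: the paper simply states that the corollary ``immediately follows from Theorems~\ref{thm:gamb} and~\ref{thm:domain-ordering-approx} from this paper, and Theorems~4.9 and~4.11 from reference~\cite{tetris},'' and you have unpacked that composition step by step. Your added remarks (that $\sigma$ preserves $w$, $n$, and $Z$, and that the output translation costs $\tilde{O}(Z)$) are the right justifications to spell out if one wants a fuller argument.
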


\begin{algorithm}
\caption{ \textsc{TetrisReordered}($Q$):}
\label{alg:tetris-reordered}
\begin{algorithmic}[1]
\State $\sigma :=$ \textsc{ADORA}($Q$) (Algorithm \ref{alg:adora})
\For{$R\in\mathcal{R}$}
  \State $\mathcal{B}_R :=$ \textsc{GAMB}($\sigma(Q)$) (Algorithm \ref{alg:gamb})
\EndFor
\State\Return $\sigma^{-1}$(\textsc{Tetris}($\mathcal{B}=\{\mathcal{B}_R\}_{R\in\mathcal{R}}$))
\end{algorithmic}
\end{algorithm}



We next show that there are infinite families of queries for which these bounds are arbitrarily smaller than prior bounds stated for Tetris in reference~\cite{tetris}. In fact,
given an arbitrary query $Q$ with any number of output tuples and any certificate size, and a default domain ordering $\sigma$, we can generate families of queries for which TetrisReordered is unboundedly faster than Tetris on $\sigma$. Our method can be seen as a generalization of the ``checkerboard'' example in Figure~\ref{fig:reorder-rows-cols}.\footnote{The example in Figure~\ref{fig:reorder-rows-cols} is a simplified version of the one used to prove Lemma~J.1 in reference~\cite{tetris}, which shows that {\em general resolutions}, which are logical operations on two DNF clauses, are more powerful than geometric resolutions, which are constrained to contiguous geometric boxes.}
Take an arbitrary query $Q=(\mathcal{R},\mathcal{A})$ with $N$ input tuples and $Z$ output tuples. Recall that the minimum certificate size for $Q$ under its default ordering is denoted $C_{\Box}(Q)$. Let $\sigma_{ADR}$ be the  ordering ADORA generates on $Q$ and let $U$ be the
corresponding upper bound on $K_{\Box}(\sigma_{ADR}(Q))$ provided by Theorem~\ref{thm:domain-ordering-approx}. Recall that $U$ depends on $r$, the maximum arity of a relation in $\mathcal{R}$,
and is an upper bound on the minimum box cover and certificate size for $\sigma_{ADR}(Q)$. We generate a family of queries $Q_p$ from $Q$ for $p=1,2,...$, whose minimum certificate size (under $\sigma$) increases to $2^{rp}C_{\Box}(Q)$ but the upper bound provided by ADORA according to Theorem~\ref{thm:domain-ordering-approx} remains at $U$.

Let $\mathcal{A}_p$ be the attribute set obtained from $\mathcal{A}$ by adding, for each $A\in\mathcal{A}$, an additional $p$ bits as a prefix
to the $d$ bits of $A$. For every relation $R\in\mathcal{R}$, construct a relation $R_p$ with $\attr(R_p)\subseteq\mathcal{A}_p$ corresponding
to $\attr(R)$. For each $t\in R$, add the following tuples to $R_p$:
\begin{center}
$\{\langle p_At.A\rangle_{A\in\attr(R)}:p_A\in\{0,1\}^p$ $\forall A\in\attr(R)\}$
\end{center}

The $p$ bits added to each attribute do not affect the structure of the query,
since these bits vary over all possible valuations for each tuple from the original query. For each attribute $A$, these bits 
effectively create $2^p$ ``copies'' of
each $A$-hyperplane. 
This increases the size of the query's input, output, and box certificate under the default ordering.
The query $Q_p=(\mathcal{R}_p,\mathcal{A}_p)$ has input size $2^{rp}N$,
output size $2^{np}Z$, and minimum box certificate
size $2^{rp}C_{\Box}(Q)$, where $r$ is the maximum arity of a relation in $\mathcal{R}$ and $n=|\mathcal{A}|$. However, this construction does not affect the number of equivalence classes on any dimension. Instead, it increases the size of each equivalence class by a factor of $2^p$. To see this, consider two values of an attribute $A\in\mathcal{A}$, $a_1$ and $a_2$, that were in the same equivalence class in $Q$. 
That is, they had the same $A$-hyperplanes for every relation $R\in\mathcal{R}$ such that $A\in\attr(R)$.
After adding the $p$ bits, there will be $2^p$ ``copies'' of $a_1$ and $a_2$, one for each $p$-bit prefix that was appended to tuples that contained $a_1$ and $a_2$. Each copy will have the same (but larger) $A$-hyperplane.
The number of equivalence classes on each attribute will remain the same, so the bound of
Theorem~\ref{thm:domain-ordering-approx} on $K_{\Box}(\sigma_{ADR}(Q))$ will remain at $U$. 
As $p$ increases, the performance gap between TetrisReordered and prior versions of Tetris becomes arbitrarily large.

\vspace{-5pt}
\section{Open Problems}
\label{chap-future-work}

\kaleb{The problems defined in this paper are ripe for further study. Beyond NP-hardness, we know little about $\DomOrCert$. Even if the domain ordering is fixed, we do not know of a way to approximate the minimum certificate size that is asymptotically faster than computing the join.} Appendix~B.3 in reference~\cite{tetris} describes how to use a variant of Minesweeper to compute a certificate. \kaleb{In Appendix~G of the online version of this paper~\cite{this-arxiv}, we show a variant of Tetris can do the same.}
\kaleb{These approaches effectively compute the join to compute a certificate.}
A difficult aspect of this problem is that a certificate is a box cover for the output relation using boxes from the input relations. However, since the output tuples are not known a priori, the exact space that needs to be covered is not known at domain reordering time. 
It is also not known a priori which input tuples are part of the output and therefore which gap boxes from the input relations are part of the certificate.

\kaleb{Similarly, little is known about the following problems: (1) determining whether a specific gap box is in the minimum-size certificate under a fixed domain ordering; (2) verifying that a given domain ordering induces a box cover or certificate of minimum size; and (3) variants of $\DomOrBox$ and $\DomOrCert$ under additional assumptions about the structure of the input relations.}
As an example, Appendix~\ref{sec:semi-join-certificate} shows that if the input relations are fully semi-join reduced (a problem known to be hard for cyclic queries~\cite{bernstein:semi-joins}), so all input tuples are part of the output and all gap boxes are relevant to the certificate, then the minimum box cover and certificate sizes are within an $\tilde{O}(1)$ factor of one another. In this special case, ADORA approximates $\DomOrCert$.

\kaleb{Each of these open problems can be defined over general or dyadic boxes, creating two related but distinct problems. Sufficiently strong hardness of approximation results for either version of a problem would imply the difference is negligible, since the solutions would be within an $\tilde{O}(1)$-factor of one another by Lemma~\ref{lemma:dyadic}. Theorem~\ref{thm:nph} shows that the general box versions of $\DomOrBox$ and $\DomOrCert$ are NP-hard, but this theorem does not imply a hardness result about the dyadic versions of these problems. Proving a hardness of approximation result is a direction for future study. Until such a result is known, the distinction between general and dyadic boxes is crucial to studying these problems.}

\kaleb{Developing an ADORA-like preprocessing algorithm that provides similar results but is query-independent, or has a better approximation ratio, is also a direction for future research. ADORA runs in $\tilde{O}(N)$ time, so its query-dependence precludes the possibility of a sub-linear time join algorithm using ADORA. The domain ordering must be computed from scratch for each different query, even if the relations in the database have not changed. Sharing some of this computation between different queries would improve on our results.}

\vspace{-5pt}
\section{Conclusions}
\label{chap-conclusions}

For queries with fixed domain orderings,
we established a $\tilde{O}(N)$-time algorithm GAMB to create a single globally good box cover index which is guaranteed to contain a certificate at most a $\tilde{O}(1)$ factor larger than the minimum size certificate for any box cover.
We then studied $\DomOrBox$, the problem of finding a domain ordering that yields the smallest possible box cover size for a given query $Q$. We proved that $\DomOrBox$ is NP-hard and presented a $\tilde{O}(N)$-time approximation algorithm ADORA that computes an ordering which yields a box cover of size $\tilde{O}(K^r)$, where $K$ is the minimum box cover size under any ordering and $r$ is the maximum arity of any relation in $Q$. 
We combined ADORA, GAMB, and Tetris in an algorithm we call TetrisReordered and
stated new beyond worst-case optimal runtimes for join processing in Corollary~\ref{cor:tetris-reordered-width}. TetrisReordered can improve the known performance bounds of prior versions of Tetris (on any fixed ordering) on infinite families of queries. 

Our work leaves several interesting problems open
\kaleb{as discussed in Section~\ref{chap-future-work}. Our results are limited to the problems $\DomOrBox$ and $\DomOrCert$, and there are several interesting variants of these problems for which little is known.}
\iftoggle{appendix-pgs}{\bibliography{paper}}{}

\begin{thebibliography}{10}

\bibitem{tetris}
Mahmoud Abo~Khamis, Hung~Q. Ngo, Christopher R{\'e}, and Atri Rudra.
\newblock Joins via geometric resolutions: Worst-case and beyond, April 2014.
\newblock \href {http://arxiv.org/abs/1404.0703} {\path{arXiv:1404.0703}}.

\bibitem{tetris-journal}
Mahmoud Abo~Khamis, Hung~Q. Ngo, Christopher R{\'e}, and Atri Rudra.
\newblock Joins via geometric resolutions: Worst case and beyond.
\newblock {\em ACM Transactions on Database Systems}, 41(4), December 2016.

\bibitem{bounds-degree-constraints}
Mahmoud Abo~Khamis, Hung~Q. Ngo, and Dan Suciu.
\newblock Computing join queries with functional dependencies.
\newblock In {\em Proceedings of the 35th ACM SIGMOD-SIGACT-SIGAI Symposium on
  Principles of Database Systems}, 2016.

\bibitem{this-arxiv}
Kaleb Alway, Eric Blais, and Semih Salihoglu.
\newblock Box covers and domain orderings for beyond worst-case join
  processing, September 2019.
\newblock \href {http://arxiv.org/abs/1909.12102} {\path{arXiv:1909.12102}}.

\bibitem{agm}
Albert Atserias, Martin Grohe, and D{\'a}niel Marx.
\newblock Size bounds and query plans for relational joins.
\newblock {\em SIAM Journal on Computing}, 42(4), 2013.

\bibitem{berman-dasgupta}
Piotr Berman and Bhaskar DasGupta.
\newblock Complexities of efficient solutions of rectilinear polygon cover
  problems.
\newblock {\em Algorithmica}, 17(4), April 1997.

\bibitem{bernstein:semi-joins}
Philip~A. Bernstein and Dah ming W.~Chiu.
\newblock {Using Semi-joins to Solve Relational Queries}.
\newblock {\em Journal of the ACM}, 28(1), January 1981.

\bibitem{cop-poly}
Kellogg~S. Booth and George~S. Lueker.
\newblock Testing for the consecutive ones property, interval graphs, and graph
  planarity using {PQ}-tree algorithms.
\newblock {\em Journal of Computer and System Sciences}, 13(3), December 1976.

\bibitem{sphere-packings}
John~Horton Conway and Neil James~Alexander Sloane.
\newblock {\em Sphere Packings, Lattices and Groups}, volume 290.
\newblock Springer Science \& Business Media, 2013.

\bibitem{culberson-reckhow}
Joseph~C. Culberson and Robert~A. Reckhow.
\newblock Covering polygons is hard.
\newblock {\em Journal of Algorithms}, 17(1), July 1994.

\bibitem{tree-clustering-schemes}
Rina Dechter and Judea Pearl.
\newblock Tree-clustering schemes for constraint-processing.
\newblock In {\em Proceedings of the Seventh AAAI National Conference on
  Artificial Intelligence}, 1988.

\bibitem{franzblau}
Deborah~S. Franzblau.
\newblock Performance guarantees on a sweep-line heuristic for covering
  rectilinear polygons with rectangles.
\newblock {\em SIAM Journal on Discrete Mathematics}, 2(3), August 1989.

\bibitem{franzblau-kleitman}
Deborah~S. Franzblau and Daniel~J. Kleitman.
\newblock An algorithm for covering polygons with rectangles.
\newblock {\em Information and Control}, 63(3), December 1984.

\bibitem{bounds-functional-dependencies}
Georg Gottlob, Stephanie~Tien Lee, Gregory Valiant, and Paul Valiant.
\newblock Size and treewidth bounds for conjunctive queries.
\newblock {\em Journal of the ACM}, 59(3), June 2012.

\bibitem{fhtw}
Martin Grohe and D{\'a}niel Marx.
\newblock Constraint solving via fractional edge covers.
\newblock {\em ACM Transactions on Algorithms}, 11(1), October 2014.

\bibitem{gudmundsson-levcopoulos}
Joachim Gudmundsson and Christos Levcopoulos.
\newblock Close approximations of minimum rectangular coverings.
\newblock {\em Journal of combinatorial optimization}, 3(4), December 1999.

\bibitem{2cbmp-nph}
Salim Haddadi.
\newblock A note on the {NP}-hardness of the consecutive block minimization
  problem.
\newblock {\em International Transactions in Operational Research}, 9, November
  2002.

\bibitem{degree-based}
Manas~R. Joglekar and Christopher~M. R{\'e}.
\newblock It's all a matter of degree: Using degree information to optimize
  multiway joins.
\newblock In {\em 19th International Conference on Database Theory}, 2016.

\bibitem{cbmp-npc}
Lawrence~T. Kou.
\newblock Polynomial complete consecutive information retrieval problems.
\newblock {\em SIAM Journal on Computing}, 6(1), March 1977.

\bibitem{rectilinear-polygons}
V.S.~Anil Kumar and H.~Ramesh.
\newblock Covering rectilinear polygons with axis-parallel rectangles.
\newblock {\em SIAM Journal on Computing}, 32(6), October 2003.

\bibitem{levcopoulos-gudmundsson-square}
Christos Levcopoulos and Joachim Gudmundsson.
\newblock Approximation algorithms for covering polygons with squares and
  similar problems.
\newblock In {\em International Workshop on Randomization and Approximation
  Techniques in Computer Science}, 1997.

\bibitem{doubly-lexical}
Anna Lubiw.
\newblock Doubly lexical orderings of matrices.
\newblock {\em SIAM Journal on Computing}, 16(5), October 1987.

\bibitem{worst-case-survey}
Hung~Q. Ngo.
\newblock Worst-case optimal join algorithms: Techniques, results, and open
  problems.
\newblock In {\em Proceedings of the 37th ACM SIGMOD-SIGACT-SIGAI Symposium on
  Principles of Database Systems}, 2018.

\bibitem{minesweeper}
Hung~Q. Ngo, Dung~T. Nguyen, Christopher R{\' e}, and Atri Rudra.
\newblock Beyond worst-case analysis for joins with {M}inesweeper.
\newblock In {\em Proceedings of the 33rd ACM SIGMOD-SIGACT-SIGART Symposium on
  Principles of Database Systems}, 2014.

\bibitem{nprr}
Hung~Q Ngo, Ely Porat, Christopher R{\'e}, and Atri Rudra.
\newblock Worst-case optimal join algorithms.
\newblock {\em Journal of the ACM}, 65(3), March 2018.

\bibitem{generic-join}
Hung~Q Ngo, Christopher R{\'e}, and Atri Rudra.
\newblock Skew strikes back: New developments in the theory of join algorithms.
\newblock {\em ACM SIGMOD Record}, 42(4), February 2014.

\bibitem{factorized}
Dan Olteanu and Jakub Z{\'a}vodn{\`y}.
\newblock Size bounds for factorised representations of query results.
\newblock {\em ACM Transactions on Database Systems}, 40(1), March 2015.

\bibitem{survey-packing-covering}
G{\'a}bor~Fejes T{\'o}th and Wlodzimierz Kuperberg.
\newblock A survey of recent results in the theory of packing and covering.
\newblock In {\em New Trends in Discrete and Computational Geometry}. 1993.

\bibitem{leapfrog-triejoin}
Todd~L. Veldhuizen.
\newblock Leapfrog triejoin: {A} simple, worst-case optimal join algorithm.
\newblock In {\em Proceedings of the 17th International Conference on Database
  Theory}, 2014.

\bibitem{yannakakis}
Mihalis Yannakakis.
\newblock Algorithms for acyclic database schemes.
\newblock In {\em Proceedings of the 7th International Conference on Very Large
  Data Bases}, 1981.

\end{thebibliography}


\appendix

\vspace{-10pt}
\section{Example with \texorpdfstring{$\omega(N)$}{N\^2} Maximal General Gap Boxes}
\label{sec:many-general-boxes}

In Section~\ref{sec:generating-boxes}, GAMB generates all maximal dyadic gap boxes
in $\tilde{O}(N)$ time. The use of dyadic boxes instead of general boxes in GAMB is necessary,
because there are relations for which the number of maximal general gap boxes is asymptotically greater than
the number of tuples in the relation. 
Our construction generalizes the example in Figure 15 in Appendix B.3 of reference~\cite{tetris}.
Let $N$ be an even number, $A$ and $B$ be attributes over domains of size $N$, and
$R_N$ be the following relation.
\begin{center}
$R_N(A,B)=\{\langle i,N/2$$-$$i$$-$$1\rangle :0\leq i<N/2\}\cup\{\langle N/2$$+$$i,N$$-$$i$$-$$1\rangle:0\leq i<N/2\}$
\end{center}

Consider the following sets of tuples which are \emph{not} in $R_N$.
\begin{center}
$T_N=\{t_i=\langle i,N/2$$-$$i\rangle:0\leq i\leq N/2\}$

$S_N=\{s_i=\langle N/2$$+$$i$$-$$1,N$$-$$i$$-$$1\rangle:0\leq i\leq N/2\}$
\end{center}

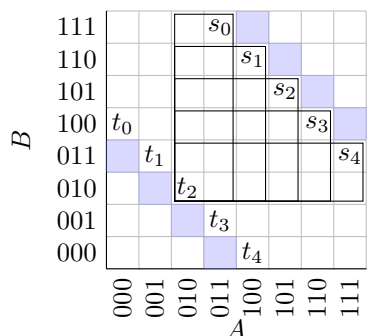
\begin{figure}[t]
\vspace{-10pt}
\begin{center}
\begin{tikzpicture}
\begin{axis}[axis lines=left,xlabel=$A$,ylabel=$B$,
  xmin=0,xmax=8,
  ymin=0,ymax=8,
  tickwidth=0,
  xticklabel interval boundaries,yticklabel interval boundaries,
  xtick={0,1,2,3,4,5,6,7,8},ytick={0,1,2,3,4,5,6,7,8},
  xticklabels={000,001,010,011,100,101,110,111},
  yticklabels={000,001,010,011,100,101,110,111},
  xticklabel style={rotate=90},
  y label style={at={(axis description cs:-0.04,0.5)},anchor=north},
  x label style={at={(axis description cs:0.5,-0.15)},anchor=south},
  grid=both,axis line style={-},
  height=5cm,width=5cm]
\filldraw[\tuplecolour,opacity=\tupleopacity] (axis cs:0,3) rectangle (axis cs:1,4);
\filldraw[\tuplecolour,opacity=\tupleopacity] (axis cs:1,2) rectangle (axis cs:2,3);
\filldraw[\tuplecolour,opacity=\tupleopacity] (axis cs:2,1) rectangle (axis cs:3,2);
\filldraw[\tuplecolour,opacity=\tupleopacity] (axis cs:3,0) rectangle (axis cs:4,1);
\filldraw[\tuplecolour,opacity=\tupleopacity] (axis cs:4,7) rectangle (axis cs:5,8);
\filldraw[\tuplecolour,opacity=\tupleopacity] (axis cs:5,6) rectangle (axis cs:6,7);
\filldraw[\tuplecolour,opacity=\tupleopacity] (axis cs:6,5) rectangle (axis cs:7,6);
\filldraw[\tuplecolour,opacity=\tupleopacity] (axis cs:7,4) rectangle (axis cs:8,5);
\node at (axis cs:0.5,4.5) {$t_0$};
\node at (axis cs:1.5,3.5) {$t_1$};
\node at (axis cs:2.5,2.5) {$t_2$};
\node at (axis cs:3.5,1.5) {$t_3$};
\node at (axis cs:4.5,0.5) {$t_4$};
\node at (axis cs:3.5,7.5) {$s_0$};
\node at (axis cs:4.5,6.5) {$s_1$};
\node at (axis cs:5.5,5.5) {$s_2$};
\node at (axis cs:6.5,4.5) {$s_3$};
\node at (axis cs:7.5,3.5) {$s_4$};
\draw[\boxcolour] (axis cs:2.1,2.1) rectangle (axis cs:3.9,7.9);
\draw[\boxcolour] (axis cs:2.1,2.1) rectangle (axis cs:4.9,6.9);
\draw[\boxcolour] (axis cs:2.1,2.1) rectangle (axis cs:5.9,5.9);
\draw[\boxcolour] (axis cs:2.1,2.1) rectangle (axis cs:6.9,4.9);
\draw[\boxcolour] (axis cs:2.1,2.1) rectangle (axis cs:7.9,3.9);
\end{axis}
\end{tikzpicture}
\end{center}
\vspace{-20pt}
\caption{Example of a relation $R_8(A,B)$ with $\omega(N)$ maximal general gap boxes.}
\label{fig:many-general-boxes}
\vspace{-10pt}
\end{figure}

Figure \ref{fig:many-general-boxes} depicts $R_8$, $T_8$, and $S_8$.
In this diagram, a set of 5 maximal general gap boxes with bottom left corners at
$t_2$ is depicted. The top right corners of these boxes are the 5 tuples in $S_8$.
In fact, for each $0\leq i\leq 4$, there are 4 or 5 maximal
general gap boxes in $R_8$ with their bottom left corner at $t_i$.
This property generalizes from $R_8$ to any value of $N$.
For each $0\leq i\leq N/2$, there are at least $N/2$ maximal general gap boxes in $R_N$ with
their bottom left corner at $t_i$.
Since there are $N/2+1$ tuples in $T_N$, the total number of maximal general gap boxes
in $R_N$ is at least $(N/2+1)(N/2)=\Theta(N^2)=\omega(N)$.

\vspace{-10pt}
\section{Incremental Maintenance of a Maximal Box Cover Index}
\label{sec:gamb-index-maintenance}

\kaleb{
Theorem~\ref{thm:gamb} states that running GAMB on a relation $R$ produces a set of dyadic gap boxes containing all maximal dyadic gap boxes of $R$. In this section, we refer to such a set as a \textit{maximal dyadic box cover index (MDBCI)} for $R$. An MDBCI can be thought of as an index for $R$ that can be computed once and used in any query over $R$. Traditional database indexes are useful because they are easy to incrementally maintain when tuples are added to or removed from a relation, so the index does not need to be computed from scratch every time the relation is modified. The following results show that efficient incremental maintenance is also possible with MDBCIs.
\iftoggle{appendix-online}{}{The proofs of Theorems~\ref{thm:mdbci-insert},~\ref{thm:mdbci-delete}, and~\ref{thm:mdbci-runtime} can be found in Appendix~B of the online version of this paper~\cite{this-arxiv}.}

\begin{algorithm}[t]
\caption{InsertMDBCI($R, B, t$): Update $B$ after $t$ is inserted to $R$}
\label{alg:mdbci-insert}
\begin{algorithmic}[1]
\For{each $b\in B$ such that $t\in b$} \label{mdbci-insert:first-loop}
  \State $B:= B\setminus\{b\}$ \label{mdbci-insert:remove}
  \For{each $b'$ such that $t\in b'\subseteq b$} \label{mdbci-insert:second-loop}
    \For{$A\in\attr(R)$ such that $|b.A|<d$} \label{mdbci-insert:third-loop}
      \State Let $b''$ be the box when one bit is added to $b'.A$ such that $t\not\in b''$ \label{mdbci-insert:inner}
      \State $B:= B\cup\{b''\}$ \label{mdbci-insert:insert}
    \EndFor
  \EndFor
\EndFor
\end{algorithmic}
\end{algorithm}

\begin{theorem}
\label{thm:mdbci-insert}
Suppose that $R$ is a relation with MDBCI $B$. Let $t\not\in R$. Let $R'=R\cup\{t\}$ and let $B'$ be the result of running Algorithm~\ref{alg:mdbci-insert} with $R'$, $B$ and $t$ as input. Then $B'$ is an MDBCI for $R'$.
\end{theorem}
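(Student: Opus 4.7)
The plan is to verify that $B'$ is an MDBCI for $R'$ by checking two directions: \emph{soundness} --- every box in $B'$ is a dyadic gap box of $R'$ --- and \emph{completeness} --- every maximal dyadic gap box of $R'$ lies in $B'$.

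For soundness, I would split $B'$ into the surviving boxes and the newly inserted boxes. A surviving box $b \in B$ satisfies $t \notin b$ (otherwise it would have been removed on line~\ref{mdbci-insert:remove}), so it was a gap box of $R$ avoiding $t$ and remains a gap box of $R' = R \cup \{t\}$. A newly inserted box $b''$ is, by construction, a dyadic refinement of some $b' \subseteq b$ with $b \in B$ along one attribute, so $b'' \cap R \subseteq b \cap R = \emptyset$, and the bit chosen on line~\ref{mdbci-insert:inner} enforces $t \notin b''$.

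Completeness is the core of the argument. I would fix an arbitrary maximal dyadic gap box $b^*$ of $R'$ (so $t \notin b^*$) and split into two cases. If $b^*$ is also maximal in $R$, then $b^* \in B$ because $B$ is an MDBCI for $R$, and since $t \notin b^*$ it is never removed, so $b^* \in B'$. Otherwise $b^*$ is maximal in $R'$ but not in $R$: there must be an attribute $A$ such that shortening $b^*.A$ by one bit yields a dyadic box $b' \supset b^*$ with $b' \cap R = \emptyset$. The key claim is then that $t \in b'$, since otherwise $b'$ would be a gap box of $R'$ strictly containing $b^*$, contradicting maximality of $b^*$ in $R'$. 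Because $b'$ is a dyadic gap box of $R$, it extends to a maximal dyadic gap box $b \in B$ with $b \supseteq b'$, and in particular $t \in b$. Tracing the algorithm, when the outer loop processes $b$, the middle loop visits $b'$, and the inner loop at attribute $A$ produces exactly $b^*$, so $b^* \in B'$.

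The main obstacle is the second subcase of completeness, specifically the reverse-engineering of $b'$ and $b$ from $b^*$. This relies on two auxiliary facts: every dyadic gap box of $R$ extends to a maximal one that necessarily lies in $B$, and Lemma~\ref{lemma:dyadic} ensures $b'$ is among the $\tilde{O}(1)$ dyadic sub-boxes of $b$ containing $t$ enumerated by the middle loop. A small check confirms the guard $|b.A| < d$: since $b \supseteq b' \supset b^*$ in attribute $A$, we have $|b.A| \le |b'.A| = |b^*.A| - 1 \le d - 1$, so the inner loop does reach attribute $A$.
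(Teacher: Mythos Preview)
Your proof is correct and follows essentially the same two-part structure (soundness and completeness) as the paper's proof, with the same case split on whether $b^*$ is already maximal in $R$. The only cosmetic difference is that in the non-maximal case you construct $b'$ first (by shortening $b^*$) and then extend to a maximal $b\in B$, whereas the paper first picks a maximal $b\supset b^*$ and then locates $b'$ between them; both routes yield the same triple $(b,b',A)$ that the algorithm traverses, and your invocation of Lemma~\ref{lemma:dyadic} is relevant only to the runtime, not the correctness argument.
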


\iftoggle{appendix-online}{
\begin{proof}
First, we will show that all maximal dyadic gap boxes of $R'$ are in $B'$. Let $b''$ be a maximal dyadic gap box of $R'$. If $b''$ is also a maximal dyadic gap box of $R$, then $b''\in B'$ because $b''\in B$ and $b''$ was not removed on line~\ref{mdbci-insert:remove}. If $b''$ is not a maximal dyadix gap box of $R$, then there must be some maximal dyadic gap box $b\in B$ for $R$ such that $b''\subset b$. Since $R$ and $R'$ differ only by $t$, $t\in b$ and $t\not\in b''$. Since $b''$ is maximal in $R'$ and not in $R$, there is an attribute $A\in\attr(R)$ for which removing the last bit of $b''.A$ results in a box $b'$ that satisfies $b''\subset b' \subseteq b$ and $t\in b'$. Since $b\in B$, InsertMDBCI iterates over $b$ on line~\ref{mdbci-insert:first-loop}. Since $t\in b'$ and $b'\subseteq b$, InsertMDBCI iterates over $b'$ on line~\ref{mdbci-insert:second-loop}. Since $|b'.A|<d$, $t\not\in b''$, and $b''$ can be obtained from $b'$ by adding one bit to the end of $b'.A$, InsertMDBCI iterates over $A$ on line~\ref{mdbci-insert:third-loop} and constructs $b''$ on line~\ref{mdbci-insert:inner}. Therefore, $b''\in B'$.

It remains to prove that $B'$ does not contain any boxes covering tuples of $R'$. Any boxes from $B$ which covered $t$ were removed on line~\ref{mdbci-insert:remove} and no other boxes in $B$ cover any tuples of $R'$. On line~\ref{mdbci-insert:inner}, all boxes $b''$ added to $B'$ do not cover $t$ by construction. Furthermore, these boxes are contained in some box $b\in B$, so they do not cover any other tuples in $R'$.
\end{proof}
}{}

\begin{algorithm}[t]
\caption{DeleteMDBCI($R, B, t$): Update $B$ after $t$ is deleted from $R$}
\label{alg:mdbci-delete}
\begin{algorithmic}[1]
\For{every dyadic box $b$ such that $t\in b$} \label{mdbci-delete:first-loop}
  \State addb $:=$ True
  \For{each $b'$ such that $t\in b'\subset b$} \label{mdbci-delete:second-loop}
    \For{$A\in\attr(R)$ such that $b'.A\neq *$} \label{mdbci-delete:third-loop}
      \State Let $b''$ be the box when the last bit of $b'.A$ is flipped \label{mdbci-delete:inner}
      \State \algorithmicif\hspace{1mm} there is no box in $B$ that contains $b''$ \algorithmicthen\hspace{1mm} addb $:=$ False \label{mdbci-delete:box-contains}
    \EndFor
  \EndFor
  \If{addb}
    \State $B:=B\cup\{b\}$ \label{mdbci-delete:add}
  \EndIf
\EndFor
\end{algorithmic}
\end{algorithm}

\begin{theorem}
\label{thm:mdbci-delete}
Suppose that $R$ is a relation with maximal dyadic box cover index $B$. Let $t\in R$. Let $R'=R\setminus\{t\}$ and let $B'$ be the result of running Algorithm~\ref{alg:mdbci-delete} with $R'$, $B$ and $t$ as input. Then $B'$ is an MDBCI for $R'$.
\end{theorem}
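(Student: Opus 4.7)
The plan is to show $B'$ is an MDBCI for $R'$ by establishing two properties: (i) every box in $B'$ is a gap box of $R'$, and (ii) every maximal dyadic gap box of $R'$ lies in $B'$. Since Algorithm~\ref{alg:mdbci-delete} only inserts into $B$ and every pre-existing element of $B$ was a gap of $R \supseteq R'$, all old members of $B$ remain gaps of $R'$; hence for (i) it suffices to analyze boxes newly added on line~\ref{mdbci-delete:add}.

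The structural core is to understand which boxes appear as generated siblings $b''$ for a fixed $b$ with $t\in b$. These split naturally into two regimes: \emph{inside} siblings, arising from pairs $(b',A)$ with $|b'.A|>|b.A|$, which are dyadic subboxes of $b$ disjoint from $t$; and \emph{outside} siblings, arising from $|b'.A|=|b.A|$, which lie in the $A$-direction neighbor of $b$. My plan is to prove by a bit-by-bit dyadic descent that the inside siblings collectively cover $b\setminus\{t\}$: any point $x\in b$ with $x\neq t$ first disagrees with $t$ on some attribute $A$ at some bit position $i\geq |b.A|$, so taking $b'$ to agree with $t$ on the first $i$ bits of $A$ (and extending suitably on other attributes) yields an inside sibling containing $x$.

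Granting this cover, property (i) follows: when addb remains True, each inside sibling is contained in some box of $B$ and is therefore a gap of $R$, so $b\setminus\{t\}$ is disjoint from $R$ and $b\cap R'=\emptyset$. For (ii), I would case-split on whether $t\in b$. If $t\notin b$, then $b$ is already a gap of $R$, and any enlargement of $b$ contains a tuple of $R'\subseteq R$, so $b$ is also maximal in $R$ and hence lies in $B\subseteq B'$. If $t\in b$, then $b\cap R=\{t\}$, so every inside sibling of $b$ is disjoint from $R$ and is contained in some maximal dyadic gap of $R$ already in $B$.

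The main obstacle will be controlling the outside siblings in the $t\in b$ sub-case of (ii), since the algorithm demands that every generated sibling---outside ones included---be contained in some box of $B$. My plan here is to exploit that for such outside siblings the generating $b'$ has $b'.A=b.A$ and thus strictly extends $b$ on some other attribute $A'$; combined with $b\cap R=\{t\}$ and the maximality of $b$ in $R'$, this should confine $b''$ to a dyadic subregion of the $A$-neighbor $b^\ast$ already witnessed as a gap of $R$ by $B$. This step is the most delicate, since maximality of $b$ in $R'$ guarantees that $b^\ast$ itself contains some tuple of $R'$; the containment argument must isolate the specific outside sibling $b''$ from those tuples, and I expect this to require the most careful bookkeeping in the full proof.
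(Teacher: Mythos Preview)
Your plan for part (i) and for the inside-sibling portion of part (ii) is sound and essentially matches the paper's argument. You are also right to single out the outside siblings as the crux---but this is not merely delicate bookkeeping; with Algorithm~\ref{alg:mdbci-delete} as written it is an unfillable gap.

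Take $d=2$, two attributes $A_1,A_2$, $R=\{(00,00),(10,00)\}$, $t=(00,00)$, $R'=\{(10,00)\}$. The box $b=\langle 0,*\rangle$ is a maximal dyadic gap box of $R'$ containing $t$. For $b'=\langle 0,0\rangle$ (so $t\in b'\subset b$) and $A=A_1$ (note $b'.A_1=b.A_1=0\neq *$), the generated sibling is $b''=\langle 1,0\rangle$, which contains $(10,00)\in R$. Since every box in any MDBCI $B$ for $R$ is a gap box of $R$, no box of $B$ can contain $b''$; hence the condition on line~\ref{mdbci-delete:box-contains} is triggered and $b$ is never added to $B'$. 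As $b\notin B$ either (it contains $t\in R$), $B'$ fails to be an MDBCI for $R'$. The confinement you hope to establish for outside siblings---that each such $b''$ lies inside some gap box of $R$---is therefore false in general, and no amount of bookkeeping will rescue it.

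The paper's own proof has exactly this flaw in its completeness direction: it asserts ``$b''$ is also a gap box for $R$'' from ``$b$ is a gap box for $R'$'' and ``$t\notin b''$,'' which silently presumes $b''\subseteq b$ and so treats every sibling as an inside one. The most plausible reading is that line~\ref{mdbci-delete:third-loop} was intended to require $b'.A\neq b.A$ (equivalently $|b'.A|>|b.A|$), which eliminates outside siblings entirely. Under that correction your argument for (i) and (ii) goes through as outlined and coincides with the paper's proof.
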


\iftoggle{appendix-online}{
\begin{proof}
Let $b$ be a maximal dyadic gap box of $R'$. If $b$ is also a maximal dyadic gap box of $R$, then $b\in B'$ because $b\in B$ and $B\subseteq B'$. Let $b'$ be any box such that $t\in b'\subset b$. For any $A$ such that $b'.A\neq *$, let $b''$ be the box when the last bit of $b'.A$ is flipped. Since $b$ is a gap box for $R'$, $R\setminus R'=\{t\}$, and $t\not\in b''$, $b''$ is also a gap box for $R$. Since $B$ is an MDBCI for $R$ and $b''$ is a gap box for $R$, there is a box in $B$ that contains $b''$. Combining these observations, we can see that DeleteMDBCI will iterate over $b$ on line~\ref{mdbci-delete:first-loop}, and for all iterations of lines~\ref{mdbci-delete:second-loop} and~\ref{mdbci-delete:third-loop}, the condition on line~\ref{mdbci-delete:box-contains} will evaluate to false. Therefore, $b\in B'$.

It remains to prove that no box $b\in B'$ contains a tuple of $R'$. If $b$ is also in $B$, then $b$ contains no tuples of $R'$ since $R'\subset R$. If $b\not\in B$, then $b$ was added by DeleteMDBCI on line~\ref{mdbci-delete:add}. Suppose $b$ contains a tuple $t'\in R'$. Let $b''\subset b$ be a dyadic box that satisfies $t'\in b''$, $t\not\in b''$, $b''\subset b$, and is maximal in the sense that there is no attribute $A\in\attr(R)$ for which the last bit of $b''.A$ can be removed while still satisfying these conditions. Since $b''\subset b$, $t\not\in b''$, and $t\in b$, there exists $A\in\attr(R)$ such that removing the last bit of $b''.A$ creates a box that contains $t$. Let $b'\subset b$ be the box when the last bit of $b''.A$ is flipped. Note that $t\in b'$, so DeleteMDBCI iterated over $b'$ on line~\ref{mdbci-delete:second-loop}. Then, DeleteMDBCI iterated over $A$ on line~\ref{mdbci-delete:third-loop} and constructed $b''$ on line~\ref{mdbci-delete:inner}. Since $t'\in b'$, $t\in R$, and $B$ is an MDBCI for $R$, there is no box in $B$ that contains $b''$, so the condition on line~\ref{mdbci-delete:box-contains} ensures $b''\not\in B'$. This is a contradiction, therefore no box $b\in B'$ contains any tuple of $R'$.
\end{proof}
}{}

\begin{theorem}
\label{thm:mdbci-runtime}
Algorithms~\ref{alg:mdbci-insert} and~\ref{alg:mdbci-delete} run in $\tilde{O}(1)$ time.
\end{theorem}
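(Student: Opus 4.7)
The plan is to bound each of the three nested loops in both algorithms by $\tilde{O}(1)$ and show that each step inside the innermost loop can be implemented in $\tilde{O}(1)$ time, assuming $B$ is maintained in a hash-based data structure keyed on dyadic box signatures.

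First I would analyze \textsc{InsertMDBCI}. The outermost loop on line~\ref{mdbci-insert:first-loop} ranges over boxes $b \in B$ that contain $t$. Since every box in $B$ is dyadic and contains the point $t$, Lemma~\ref{lemma:dyadic} (applied to the dyadic box consisting of the single point $t$) immediately gives that there are at most $\tilde{O}(1)$ dyadic boxes containing $t$, so at most $\tilde{O}(1)$ such boxes can live in $B$. The loop on line~\ref{mdbci-insert:second-loop} ranges over dyadic boxes $b'$ with $t \in b' \subseteq b$; again by Lemma~\ref{lemma:dyadic}, applied inside the fixed dyadic box $b$, there are $\tilde{O}(1)$ such $b'$. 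The innermost loop on line~\ref{mdbci-insert:third-loop} ranges over attributes of $R$, which contributes a factor of $n = \tilde{O}(1)$ under the paper's convention. The body (construction of $b''$ by flipping one bit and insertion into $B$) is $\tilde{O}(1)$ assuming $B$ is stored as a hash set of dyadic signatures. Multiplying, \textsc{InsertMDBCI} runs in $\tilde{O}(1)$ time.

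The analysis of \textsc{DeleteMDBCI} follows the same pattern. The outer loop on line~\ref{mdbci-delete:first-loop} enumerates every dyadic box containing the point $t$, which is $\tilde{O}(1)$ by Lemma~\ref{lemma:dyadic}. The loop on line~\ref{mdbci-delete:second-loop} enumerates dyadic boxes $b'$ with $t \in b' \subset b$, again $\tilde{O}(1)$, and the loop on line~\ref{mdbci-delete:third-loop} contributes a factor of $n = \tilde{O}(1)$. The only slightly delicate step is the containment check on line~\ref{mdbci-delete:box-contains}: determining whether any box in $B$ contains $b''$. The key observation is that $b''$ is itself a dyadic box, so by Lemma~\ref{lemma:dyadic} only $\tilde{O}(1)$ dyadic boxes can contain it, and we can enumerate these explicitly and query the hash set representation of $B$ for each in $\tilde{O}(1)$ time per lookup. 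Thus this test is $\tilde{O}(1)$, and the overall algorithm is $\tilde{O}(1)$.

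The only real obstacle I anticipate is justifying the $\tilde{O}(1)$ cost of the containment query in \textsc{DeleteMDBCI}, since a naive scan of $B$ would be too expensive and would have to be ruled out by the above enumerate-and-lookup argument; everything else is a direct combinatorial bound on loop iterations via Lemma~\ref{lemma:dyadic} together with the standard $\tilde{O}$-notation convention that hides polylog-in-$N$ factors and the query parameters $m$ and $n$.
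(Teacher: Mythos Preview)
Your proposal is correct and follows essentially the same decomposition as the paper: bound each nested loop by $\tilde{O}(1)$ via Lemma~\ref{lemma:dyadic} and argue the body is $\tilde{O}(1)$. The one minor difference is the data structure for $B$: the paper invokes the dedicated dyadic-box data structure from Appendix~C of reference~\cite{tetris}, which natively supports ``return all boxes in $B$ containing a given dyadic box'' in $\tilde{O}(1)$ time, whereas you use a plain hash set and handle the containment query on line~\ref{mdbci-delete:box-contains} by enumerating the $\tilde{O}(1)$ candidate super-boxes of $b''$ and probing each. Your version is more self-contained; the paper's avoids the explicit enumeration argument. Note that the same enumerate-and-probe trick is also needed (but not spelled out in your text) for line~\ref{mdbci-insert:first-loop} of \textsc{InsertMDBCI}, where you must \emph{find} the boxes in $B$ containing $t$, not merely bound their number.
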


\iftoggle{appendix-online}{
\begin{proof}
First, note that Appendix C of reference~\cite{tetris} describes a data structure to store a set of dyadic boxes $B$ such that for a given dyadic box $b$, queries that return the set $\{b'\in B: b\subseteq b'\}$ can be computed in $\tilde{O}(1)$ time. In this data structure, insertions and deletions of single dyadic boxes can also be done in $\tilde{O}(1)$ time. We will assume our MDBCIs are stored in data structures with this property.

Consider the runtime of InsertMDBCI. Lines~\ref{mdbci-insert:inner} and~\ref{mdbci-insert:insert} modify one prefix of a box and insert the new box into $B'$, both of which can be done in $\tilde{O}(1)$ time. The inner loop on line~\ref{mdbci-insert:third-loop} iterates over at most $d=\tilde{O}(1)$ attributes. The second loop on line~\ref{mdbci-insert:second-loop} iterates over at most $\tilde{O}(1)$ boxes, by Lemma~\ref{lemma:dyadic}. Because $B$ is stored in the dyadic box data structure mentioned above, the deletion of a box on line~\ref{mdbci-insert:remove} takes $\tilde{O}(1)$ time, and the outer loop on line~\ref{mdbci-insert:first-loop} can find the set of at most $\tilde{O}(1)$ boxes to iterate over in $\tilde{O}(1)$ time. In total, InsertMDBCI runs in $\tilde{O}(1)$ time.

Now consider the runtime of DeleteMDBCI. Lines~\ref{mdbci-delete:inner} and~\ref{mdbci-delete:box-contains} modify one prefix of a box and then query $B$ for any boxes containing $b''$, which can be answered in $\tilde{O}(1)$ time by the aforementioned dyadic box data structure. The inner loop on line~\ref{mdbci-delete:third-loop} iterates over at most $d=\tilde{O}(1)$ attributes. The second loop on line~\ref{mdbci-delete:second-loop} iterates over at most $\tilde{O}(1)$ boxes by Lemma~\ref{lemma:dyadic}. The insertion of a single box on line~\ref{mdbci-delete:add} takes $\tilde{O}(1)$ time. The outer loop on line~\ref{mdbci-delete:first-loop} iterates over $\tilde{O}(1)$ boxes by Lemma~\ref{lemma:dyadic}. In total, DeleteMDBCI runs in $\tilde{O}(1)$ time.
\end{proof}
}{} 
} 


\vspace{-10pt}
\section{Proof that \texorpdfstring{$\DomOrBox$}{DomOrBoxMinB} is NP-hard}
\label{sec:nph-full}
This section contains proofs of the 6 transformation steps we used in the proof of Theorem~\ref{thm:nph}.

\vspace{-10pt}
\subsection{Proof of Step 1:}
\label{app:step1}

\textbf{Claim.} \emph{Every $r_{i,j}$ row can be made adjacent to some equivalent $r_{k,\ell}$ row.}

Let $r_1:=r_{i,j}$ be a row which is not adjacent to any equivalent row.
Let $r_2:=r_{k,\ell}$ be any row equivalent to $r_1$ (at least one such row exists because we duplicate each row of $M$ when constructing $M'$). Since $r_1$ is not adjacent to
any equivalent row and there are an even number of rows equivalent to $r_1$, there must
be some run $E$ of rows equivalent to $r_1$ with odd length. If $E$ has length 1,
we assume $r_2$ is the one row in $E$, and therefore $r_2$ is not adjacent to any equivalent row.
If $E$ has length at least 3, we assume $r_2$
is the second row in $E$, and therefore $r_2$ is not adjacent to $p_{k,\ell}$.
Let $p_1:=p_{i,j}$ and let $p_2:=p_{k,\ell}$.
Let $c_{p1}$ be the column where $p_1$ has a 1-cell, and let $c_{p2}$ be the column where $p_2$ has a 1-cell.
Let $c_1$ and $c_2$ be the columns where $r_1$ and $r_2$ both have 1-cells.
Let $b_1\in B$ be the box covering the padding column in $r_1$ with greatest width.
Let $b_2\in B$ be the box covering the padding column in $r_2$ with greatest width.
Let $b_3\in B$ be the box covering the padding column in $p_1$.
Let $b_4\in B$ be the box covering the padding column in $p_2$.
Our approach will be to remove the rows $r_1,r_2,p_1$, and $p_2$ from
$M'$, then insert them in the order $(p_1,r_1,r_2,p_2)$ at the bottom of $M'$.
In this order, the 1-cells of these rows can be covered by 4 boxes,
regardless of the column ordering. A box of width 1 and height 2 can be used to cover the two 1-cells
in each of the columns in $\{c_1,c_2,c_{p1},c_{p2}\}$.
To show that this modification does not increase
the number of boxes in $B$, it suffices to show that there are at least 4 boxes
which can be removed from $B$ when we remove these 4 rows from $M'$.
We split our analysis into four cases.

\begin{enumerate}
\item $b_1\neq b_3$ and $b_2\neq b_4$.
In this case, all of $\{b_1,b_2,b_3,b_4\}$ are distinct and all 4 of these boxes are removed
when we remove the rows $r_1,r_2,p_1,p_2$.
\item $b_1\neq b_3$ and $b_2=b_4$.
Since $b_2=b_4$, $r_2$ is adjacent to $p_2$. By our previous assumptions about $r_2$, $r_2$ is not adjacent to any equivalent row. W.l.o.g., assume that $p_2$ is directly
below $r_2$. Let $r_3$ be the row directly above $r_2$. $r_3$ is not equivalent to $r_2$,
so there exists a box $b_5$ covering at least one of $c_1$ or $c_2$ in $r_2$ with height 1,
since it cannot extend vertically to either $p_2$ or $r_3$.
$b_5$ is not equal to $b_2$, because $b_2$ has height 2.
$\{b_1,b_2,b_3,b_5\}$ is a set of 4 distinct boxes which are removed when we remove
the rows $\{r_1,r_2,p_1,p_2\}$.
\item $b_1=b_3$ and $b_2\neq b_4$.
Since $b_1=b_3$, $r_1$ is adjacent to $p_1$. Suppose w.l.o.g. that
$p_1$ is directly above $r_1$. Let $r_3$ be the row directly below $r_1$. Since $r_1$
is not adjacent to any equivalent rows, $r_3$ is not equivalent to $r_1$.
Therefore, there is a box $b_6\in B$ covering at least one of $c_1$ or $c_2$ in $r_1$
which has height 1, since it cannot extend vertically to either $p_1$ or $r_3$.
$b_6$ is not equal to $b_1$, since $b_1$ has height 2.
Now the set of boxes $\{b_1,b_2,b_4,b_6\}$ is a set of 4 distinct boxes which are removed
when we remove the rows $\{r_1,r_2,p_1,p_2\}$.
\item $b_1=b_3$ and $b_2=b_4$.
This case can be proven by combining the arguments from the previous two cases.
Since $b_2=b_4$, we can define the box $b_5$ exactly as in case 2.
Since $b_1=b_3$, we can define the box $b_6$ exactly as in case 3.
Then, $\{b_1,b_2,b_5,b_6\}$ is a set of 4 distinct boxes which are removed from $B$
when we remove the rows $\{r_1,r_2,p_1,p_2\}$.
\end{enumerate}

\vspace{-10pt}
\subsection{Proof of Step 2}
\label{app:step2}

\textbf{Claim.} \emph{Every run of equivalent $r_{i,j}$ rows can be made to have even length.}

Let $E_1$ be a run of equivalent $r_{i,j}$ rows of odd length.
By the claim of step 1, $E_1$ has length $\geq 3$.
Let $r_1$ be the second row in $E_1$.
Since $E_1$ has odd length and there are an even number of rows equivalent to $r_1$,
there exists another run $E_2$ of rows equivalent to $r_1$ with odd length.
$E_2$ also has length $\geq 3$.
Let $c_{p1}$ be the column which has a 1-cell only in $r_1$ and its corresponding padding row.
Let $b\in B$ be the box which covers $c_{p1}$ in $r_1$. Since $r_1$ is not adjacent to its
padding row, $b$ has height 1.
If we remove $r_1$ from $M'$, $b$ can be removed. 
By inserting $r_1$ directly below the first row in $E_2$, a unit box can be used to cover $c_{p1}$ in
$r_1$.
Let $r_2$ be the first row in $E_2$. Let $c_1$ and $c_2$ be the two columns of $M'$ where $r_1$ and $r_2$
share 1-cells.
To cover these two 1-cells in
$r_1$, we can extend vertically the boxes covering $c_1$ and $c_2$ in $r_2$. We may assume these
boxes can be extended vertically, because at most two of the rows in $E_2$ have their
$c_1$ (or $c_2$) cell covered by a box which stretches horizontally from a padding column.
That is, there is \emph{some} row in $E_2$ where the box covering the $c_1$ (or $c_2$) cell can
be extended vertically to cover the $c_1$ (or $c_2$) cell of $r_1$.
Hence, this transformation can be made without increasing the size of $B$.
After this, both $E_1$ and $E_2$ have even length. Continue this process until every run of equivalent $r_{i,j}$ rows have even length.

\vspace{-10pt}
\subsection{Proof of Step 3}
\label{app:step3}

\textbf{Claim.} \emph{Every run of equivalent $r_{i,j}$ rows can be made to have length 2.}

Let $E$ be a run of equivalent $r_{i,j}$ rows of even length greater than 2. So $E$ has a length of 
at least 4. Let $r_1$ be the second row in $E$
and let $r_2$ be the third row in $E$. Since $E$ has length at least 4, neither $r_1$ nor
$r_2$ are adjacent to their respective padding rows, $p_1$ and $p_2$.
We claim the boxes covering the padding columns in $r_1$ and $r_2$
have width 1. We split our analysis into two cases. Let $c_1$ and $c_2$ be the two columns where
$r_1$ and $r_2$ both have 1-cells.
\begin{enumerate}
\item $c_1$ and $c_2$ are adjacent.
At most 2 of the rows in $E$ have their padding columns adjacent to $(c_1,c_2)$ on either side.
This means there is some row $r_3$ in $E$ where the box $b$ covering $c_1$ and $c_2$ does not
also cover its padding column. $b$ can be extended vertically to cover $c_1$ and $c_2$ in all rows
of $E$. Any boxes covering padding columns for rows in $E$ can be replaced with boxes of width 1,
and all of the 1-cells in the rows of $E$ remain covered.
\item $c_1$ and $c_2$ are not adjacent.
At most 2 rows in $E$ have their padding columns adjacent to $c_1$ on either side.
This means there is some row $r_3$ in $E$ where the box $b$ covering $c_1$ does not
also cover its padding column. $b$ can be extended vertically to cover $c_1$ in all rows of $E$.
The same argument applies for $c_2$. Any boxes covering padding columns for rows in $E$ can be
replaced with boxes of width 1, and all 1-cells in the rows of $E$ remain covered.
\end{enumerate}

Now, removing $p_1$ and $p_2$ removes two boxes from $B$, since unit boxes must be covering the single 1-cells
in $p_1$ and $p_2$.
Inserting $(p_1,p_2)$ in order in between $r_1$ and $r_2$, we can cover the 1-cells in $(c_{p1},p_1)$
and $(c_{p2},p_2)$ by extending vertically the width 1 boxes covering $(c_{p1},r_1)$ and $(c_{p2},r_2)$.
This splits any boxes which vertically streched from $r_1$ to $r_2$ into two. There were at most two
such boxes, so the total number of boxes in $B$ does not increase. Now $E$ is split into two distinct runs
of equivalent rows, one of length 2 and one of length $|E|-2$. This process can be
repeated until all runs have length exactly 2.

\vspace{-10pt}
\subsection{Proof of Step 4}
\label{app:step4}

\textbf{Claim.} \emph{The padding rows $p_{i,j}$ can be made adjacent to their matching $r_{i,j}$ rows.}

Let $r_1:=r_{i,j}$ be a row which is not adjacent to its padding row $p_1:=p_{i,j}$.
By the claim of step 3, we know $r_1$ is adjacent to exactly one row, $r_2$, that is equivalent to $r_1$.
Let $p_2$ be the padding row matching $r_2$.
Let $c_1$ and $c_2$ be the columns where $r_1$ and $r_2$ share 1-cells.
Let $c_{p1}$ be the column which has 1-cells only in $r_1$ and $p_1$.
Let $c_{p2}$ be the column which has 1-cells only in $r_2$ and $p_2$.
Let $b_1$ be the box which covers the 1-cell in row $r_1$ and column $c_{p1}$
of greatest width.
Let $b_2$ be the box which covers the 1-cell in row $r_2$ and column $c_{p2}$
of greatest width.
Let $b_3$ be the box which covers the 1-cell in $p_1$.
Let $b_4$ be the box which covers the 1-cell in $p_2$.
We split our analysis into two cases.
\begin{enumerate}
\item $r_2$ is adjacent to $p_2$.
In this case, similar to our argument in step 1, there exists a box $b_5\in B$
with height 1 which covers $c_1$ or $c_2$ (or both) in $r_2$.
By removing the rows $\{r_1,r_2,p_1,p_2\}$, the 4 distinct boxes $\{b_1,b_2,b_3,b_5\}$
are all removed from $B$.
By inserting the rows $(p_1,r_1,r_2,p_2)$ in order at the bottom of the matrix,
we can cover their 1-cells with at most 4 boxes, so the total number of boxes in $B$ does not increase.
\item $r_2$ is not adjacent to $p_2$.
In this case, $r_1$ is not adjacent to $p_1$ and $r_2$ is not adjacent to $p_2$, so
$\{b_1,b_2,b_3,b_4\}$ are 4 distinct boxes in $B$ which are removed if we remove
rows $\{r_1,r_2,p_1,p_2\}$. By inserting $(p_1,r_1,r_2,p_2)$ in order at the bottom
of the matrix, we can cover their 1-cells with at most 4 boxes, so the size of $B$ does not increase.
\end{enumerate}

We can repeat this process until all $r_{i,j}$ rows are adjacent to their matching $p_{i,j}$ rows.

\vspace{-10pt}
\subsection{Proof of Step 5}
\label{app:step5}

\textbf{Claim.} \emph{The row order $\sigma_r'$ can be made to exactly match the default row order of $M'$.}

By the claims of steps 3 and 4, all of the rows are now divided into separate 4-row units
containing a run of two equivalent $r_{i,j}$ rows surrounded by their two matching padding
rows. There are no boxes in $B$ which can stretch vertically across two or more of these separate
units, because there are no two $p_{i,j}$ rows which share a 1-cell. Thus, we are free to reorder
these units arbitrarily. Order the units so that for all $i$, the $i$-th unit contains two $r_{i,j}$ rows which
correspond to the $i$-th row of the original matrix $M$. The resulting row order $\sigma_r'$ is then
equal to the default row ordering of $M'$, modulo any equivalent rows which are swapped from their default positions.
Since equivalent rows are equal up to reordering the columns of $M'$, there exists an ordering on
the columns of $M'$ that transforms $\sigma_r'(M')$ back to the original matrix $M'$.
In other words, the row ordering $\sigma_r'$ is now equivalent to the default row ordering of $M'$
up to a relabelling of the rows. This is sufficient for our purposes, since we can relabel the rows
accordingly and move on to modifying the column ordering only.

\vspace{-10pt}
\subsection{Proof of Step 6}
\label{app:step6}

\textbf{Claim.} \emph{The column order $\sigma_c'$ can be made to exactly match the default column order of $M'$
on the last $2n$ columns.}

For each padding row $p_{i,j}$, the box $b$ covering the single 1-cell in $p_{i,j}$ has width 1.
By step 4, each padding row is adjacent to its corresponding $r_{i,j}$ row. This means $b$ extends
vertically to also cover the only other 1-cell in its column. Therefore, by moving this column to the right side of the matrix,
we do not increase the total number of boxes in $B$.
Once all of these padding columns have been moved to the right, the boxes covering all of their 1-cells
all have width 1. Thus, we can reorder them to exactly match the last $2n$ columns in the default column ordering
of $M'$ without modifying any boxes in $B$.

\vspace{-10pt}
\section{ADORA's Runtime Analysis}
\label{app:adora-runtime}
ADORA calls Algorithm~\ref{alg:order-attr} $n$, so $\tilde{O}(1)$, times. In  Algorithm~\ref{alg:order-attr}, the sorting of $m$ relations according to $\phi$ on line~\ref{line:subroutine:firstfor} takes $\tilde{O}(N)$ time. 
The for-loop beginning on line~\ref{line:subroutinesecondfor} iterates over each domain value $a\in D$ and each $R\in\mathcal{S}$ and
appends $H(R,A,a)$ to $\mathcal{T}[a]$. Since $R$ was sorted lexicographically according to $\phi$, which places $A$ as the first relation, all tuples with the same $A$-value are now consecutive in $R$. Therefore, with a single
linear pass through $R$, we can compute all of the hyperplanes $H(R,A,a)$. We do this for each relation,
so the runtime is bounded by $O(mN)=\tilde{O}(N)$. For the final sorting of $D$ on line~\ref{line:subroutinefinalsort} observe that the total size of the array $\mathcal{T}$, summed over all domain values $a$, is at most $N$. Thus, we are sorting an array of arrays where the total amount
of data is of size $\tilde{O}(N)$, which can be done  in $\tilde{O}(N)$ time (e.g., with a merge-sort algorithm that merges two sorted sub-arrays in $\tilde{O}(N)$ time), completing the proof that ADORA's runtime is $\tilde{O}(N)$ as claimed in Theorem~\ref{thm:domain-ordering-approx}.

\vspace{-10pt}
\section{The ADORA Approximation Bound Is Tight}
\label{sec:adora-tight}

Theorem~\ref{thm:domain-ordering-approx} proved that ADORA produces a domain ordering $\sigma$ for $Q$ such that
$K_{\Box}(\sigma(Q))=\tilde{O}(K^r)$, where $K$ is the minimum box cover size under any domain ordering
and $r$ is the maximum arity of a relation in $Q$. We will show that this bound is
tight by presenting a class of 2D relations $R_d$ for which ADORA returns a domain ordering $\sigma$
such that $K_{\Box}(\sigma(R_d))=\Omega(K^2)$, where $K$ is the minimum box cover size for $R_d$ under any ordering.
For any integer $d>0$, let $R_d(A,B)$ be the relation over 2 $d$-bit attributes $A$ and $B$ given by
\begin{center}
$R_d(A,B)=\{\langle 0a,0b\rangle:a,b\in\{0,1\}^{d-1}, a\neq b\}\cup\{\langle 1a,1b\rangle:a,b\in\{0,1\}^{d-1},a\neq b\}$
\end{center}

\iftoggle{appendix-online}{
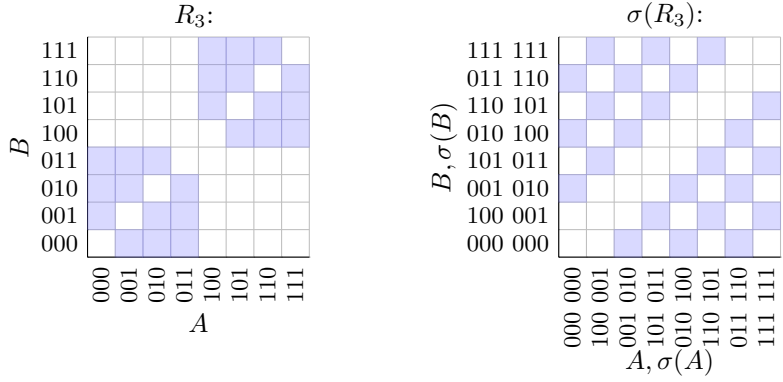
\begin{figure*}[t]

\begin{center}
\vspace{-15pt}
\begin{tikzpicture}
\node at (1.2cm,3.2cm) {$R_3$:};
\node at (7.4cm,3.2cm) {$\sigma(R_3)$:};
\begin{axis}[axis lines=left,xlabel=$A$,ylabel=$B$,
  xmin=0,xmax=8,
  ymin=0,ymax=8,
  tickwidth=0,
  xticklabel interval boundaries,yticklabel interval boundaries,
  xtick={0,1,2,3,4,5,6,7,8},ytick={0,1,2,3,4,5,6,7,8},
  xticklabels={000,001,010,011,100,101,110,111},yticklabels={000,001,010,011,100,101,110,111},
  xticklabel style={rotate=90,font=\small},
  yticklabel style={font=\small},
  y label style={at={(axis description cs:0.03,0.5)},anchor=north},
  x label style={at={(axis description cs:0.5,-0.2)},anchor=south},
  grid=both,axis line style={-},
  height=4.5cm,width=4.5cm,
  at={(-0.2cm,0cm)}]
\filldraw[\tuplecolour,opacity=\tupleopacity] (axis cs:1,0) rectangle (axis cs:2,1);
\filldraw[\tuplecolour,opacity=\tupleopacity] (axis cs:2,0) rectangle (axis cs:3,2);
\filldraw[\tuplecolour,opacity=\tupleopacity] (axis cs:3,0) rectangle (axis cs:4,3);
\filldraw[\tuplecolour,opacity=\tupleopacity] (axis cs:0,1) rectangle (axis cs:1,4);
\filldraw[\tuplecolour,opacity=\tupleopacity] (axis cs:1,2) rectangle (axis cs:2,4);
\filldraw[\tuplecolour,opacity=\tupleopacity] (axis cs:2,3) rectangle (axis cs:3,4);
\filldraw[\tuplecolour,opacity=\tupleopacity] (axis cs:5,4) rectangle (axis cs:6,5);
\filldraw[\tuplecolour,opacity=\tupleopacity] (axis cs:6,4) rectangle (axis cs:7,6);
\filldraw[\tuplecolour,opacity=\tupleopacity] (axis cs:7,4) rectangle (axis cs:8,7);
\filldraw[\tuplecolour,opacity=\tupleopacity] (axis cs:4,5) rectangle (axis cs:5,8);
\filldraw[\tuplecolour,opacity=\tupleopacity] (axis cs:5,6) rectangle (axis cs:6,8);
\filldraw[\tuplecolour,opacity=\tupleopacity] (axis cs:6,7) rectangle (axis cs:7,8);
\end{axis}
\begin{axis}[axis lines=left,xlabel={$A,\sigma(A)$},ylabel={$B,\sigma(B)$},
  xmin=0,xmax=8,
  ymin=0,ymax=8,
  tickwidth=0,
  xticklabel interval boundaries,yticklabel interval boundaries,
  xtick={0,1,2,3,4,5,6,7,8},ytick={0,1,2,3,4,5,6,7,8},
  xticklabels={000 000,100 001,001 010,101 011,010 100,110 101,011 110,111 111},
  yticklabels={000 000,100 001,001 010,101 011,010 100,110 101,011 110,111 111},
  xticklabel style={rotate=90,font=\small},
  yticklabel style={font=\small},
  y label style={at={(axis description cs:-0.2,0.5)},anchor=north},
  x label style={at={(axis description cs:0.5,-0.4)},anchor=south},
  grid=both,axis line style={-},
  height=4.5cm,width=4.5cm,
  at={(6cm,0cm)}]
\filldraw[\tuplecolour,opacity=\tupleopacity] (axis cs:2,0) rectangle (axis cs:3,1);
\filldraw[\tuplecolour,opacity=\tupleopacity] (axis cs:4,0) rectangle (axis cs:5,1);
\filldraw[\tuplecolour,opacity=\tupleopacity] (axis cs:6,0) rectangle (axis cs:7,1);
\filldraw[\tuplecolour,opacity=\tupleopacity] (axis cs:3,1) rectangle (axis cs:4,2);
\filldraw[\tuplecolour,opacity=\tupleopacity] (axis cs:5,1) rectangle (axis cs:6,2);
\filldraw[\tuplecolour,opacity=\tupleopacity] (axis cs:7,1) rectangle (axis cs:8,2);
\filldraw[\tuplecolour,opacity=\tupleopacity] (axis cs:0,2) rectangle (axis cs:1,3);
\filldraw[\tuplecolour,opacity=\tupleopacity] (axis cs:4,2) rectangle (axis cs:5,3);
\filldraw[\tuplecolour,opacity=\tupleopacity] (axis cs:6,2) rectangle (axis cs:7,3);
\filldraw[\tuplecolour,opacity=\tupleopacity] (axis cs:1,3) rectangle (axis cs:2,4);
\filldraw[\tuplecolour,opacity=\tupleopacity] (axis cs:5,3) rectangle (axis cs:6,4);
\filldraw[\tuplecolour,opacity=\tupleopacity] (axis cs:7,3) rectangle (axis cs:8,4);
\filldraw[\tuplecolour,opacity=\tupleopacity] (axis cs:0,4) rectangle (axis cs:1,5);
\filldraw[\tuplecolour,opacity=\tupleopacity] (axis cs:2,4) rectangle (axis cs:3,5);
\filldraw[\tuplecolour,opacity=\tupleopacity] (axis cs:6,4) rectangle (axis cs:7,5);
\filldraw[\tuplecolour,opacity=\tupleopacity] (axis cs:1,5) rectangle (axis cs:2,6);
\filldraw[\tuplecolour,opacity=\tupleopacity] (axis cs:3,5) rectangle (axis cs:4,6);
\filldraw[\tuplecolour,opacity=\tupleopacity] (axis cs:7,5) rectangle (axis cs:8,6);
\filldraw[\tuplecolour,opacity=\tupleopacity] (axis cs:0,6) rectangle (axis cs:1,7);
\filldraw[\tuplecolour,opacity=\tupleopacity] (axis cs:2,6) rectangle (axis cs:3,7);
\filldraw[\tuplecolour,opacity=\tupleopacity] (axis cs:4,6) rectangle (axis cs:5,7);
\filldraw[\tuplecolour,opacity=\tupleopacity] (axis cs:1,7) rectangle (axis cs:2,8);
\filldraw[\tuplecolour,opacity=\tupleopacity] (axis cs:3,7) rectangle (axis cs:4,8);
\filldraw[\tuplecolour,opacity=\tupleopacity] (axis cs:5,7) rectangle (axis cs:6,8);
\end{axis}
\end{tikzpicture}
\end{center}
\vspace{-20pt}
\caption{A relation $R_3$ for which the bound of Theorem~\ref{thm:domain-ordering-approx} is tight.}
\label{fig:adora-tight}
\vspace{-10pt}
\end{figure*}

The relation $R_3$ is depicted in Figure~\ref{fig:adora-tight} (left).
The minimum size box cover for $R_3$ consists of the 2 boxes which cover the top left and bottom right quadrants,
the $2\times 2$ box which covers the middle 4 cells,
as well as the 6 unit boxes which cover the diagonal line from the bottom left to the top right, for
a total box cover size of 9. This happens to be the minimum box cover size for $R_3$ under any domain ordering.
The relation $\sigma(R_3)$ depicted in Figure~\ref{fig:adora-tight} (right) is $R_3$
under a different domain ordering $\sigma$, obtained by moving all of the even
domain values to the range $[000-011]$ and all of the odd domain values to the range $[100-111]$ in $A$ and $B$.
The minimum box cover for $\sigma(R_3)$ consists of the 18 unit boxes covering the gap cells which are surrounded
by tuples, plus the 7 $2\times 2$ boxes which can be tiled to cover the diagonal stretch of gaps, for a total
box cover size of 25.
$R_3$ generalizes to $R_d$ for any $d>0$.
}{
\kaleb{As an example, $R_3$ is illustrated in Appendix~E of the online version of this paper~\cite{this-arxiv}.}
}
The default ordering of $R_d$ has a minimum box cover of size $K=2^d+1$. However, there is
a bad ordering $\sigma_d$ such that $\sigma_d(R_d)$ has a minimum box cover size of
\iftoggle{appendix-online}{$2^d\cdot 2^{d-1}-2^d+1=$}{}$\Omega(2^{2d})=\Omega(K^2)$. The key observation about this example is that no rows or columns in $R_d$ are equal, so ADORA may return $\sigma_d$ as a solution.
Since $R_d$ has arity 2, the bound of Theorem~\ref{thm:domain-ordering-approx}
is tight in this case.

\vspace{-10pt}
\section{Approximating $\DomOrCert$ On Fully Semi-join Reduced Queries}
\label{sec:semi-join-certificate}

This section serves to illustrate that if the input relations of a query $Q$ are fully semi-join reduced, so we know a priori that all of the input tuples contribute to the query's output, then  $\DomOrCert$ can be approximated with ADORA. 
We use the term \em{``dangling'' (input) tuple} as follows. Assume the domain ordering $\sigma$ is fixed. Given a query $Q=(\mathcal{R},\mathcal{A})$ (under $\sigma$) and a relation $R\in\mathcal{R}$, the tuple $t\in R$ is a
\em{dangling tuple} if there is no tuple $t'$ in the output of $Q$ such that $\pi_{\attr(R)}(t')=t$.
$Q$ is said to be fully semi-join reduced~\cite{bernstein:semi-joins} if there are no dangling tuples in any of the relations in $\mathcal{R}$.
The problem of fully semi-join reducing a query by removing all of the dangling tuples
is known to be hard for cyclic queries~\cite{bernstein:semi-joins}.
\iftoggle{appendix-online}{We next show that an oracle which computes the full semi-join reduction of a query $Q$ would allow us to
bridge the gap between minimizing the box cover size and certificate size for $Q$.}
{
\kaleb{Proposition~\ref{propn:semi-join} (proven in Appendix~F of the online version of this paper~\cite{this-arxiv}) shows that an oracle which computes the full semi-join reduction of a query $Q$ would allow us to
bridge the gap between minimizing the box cover size and certificate size for $Q$.}}

\begin{proposition}
\label{propn:semi-join}
Let $\sigma$ be a domain ordering and let \iftoggle{appendix-online}{$Q=(\mathcal{R},\mathcal{A})$}{$Q$} be
 a fully semi-join reduced query under $\sigma$. Let  $K_{\Box}(Q)$ be the size of the minimum box cover for $Q$ under $\sigma$. Let $C_{\Box}(Q)$ be the size of the minimum certificate of $Q$ under $\sigma$. Then $K_{\Box}(Q)=\tilde{\Theta}(C_{\Box}(Q))$.
\end{proposition}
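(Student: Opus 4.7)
The plan is to prove both directions of $K_{\Box}(Q) = \tilde{\Theta}(C_{\Box}(Q))$. The inequality $C_{\Box}(Q) \leq K_{\Box}(Q)$ is immediate: any box cover $\mathcal{B}$ is itself a valid certificate of itself (since the extended boxes automatically cover $\overline{Q}$), so $|\mathcal{C}(\mathcal{B})| \leq |\mathcal{B}|$, and taking minima over $\mathcal{B}$ gives the bound.

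For the nontrivial direction $K_{\Box}(Q) = \tilde{O}(C_{\Box}(Q))$, I would start from a minimum certificate $\mathcal{C}^*$ of size $C_{\Box}(Q)$ and produce a box cover of size $\tilde{O}(|\mathcal{C}^*|)$. The main technical step is a certificate-based analogue of Lemma~\ref{lemma:few-boxes-few-planes}: under the fixed ordering $\sigma$, each attribute $A$ admits at most $2\,C_{\Box}(Q)$ switches, meaning pairs of adjacent $a_1, a_2 \in \dom(A)$ with $a_1 \not\sim a_2$. The argument mirrors the face-of-box correspondence in Lemma~\ref{lemma:few-boxes-few-planes}, but replaces the assumption that $\mathcal{B}_R$ covers $\overline{R}$ by full semi-join reduction: given a switch with witness tuple $(a_1, t) \in R$ (for some $R$ with $A \in \attr(R)$), semi-join reduction extends $(a_1, t)$ to some $\bar{u} \in Q$, so $\bar{u}[A \gets a_2] \in \overline{Q}$ must be covered by some $b \in \mathcal{C}^*_{R_b}$. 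A short case analysis rules out $A \notin \attr(R_b)$ (else $b$ would also cover the output tuple $\bar{u}$, contradicting $b$'s status as a gap box of $R_b$) and forces $b$ to have an $A$-face between $a_1$ and $a_2$, charging each switch to a distinct $A$-face of a box in $\mathcal{C}^*$.

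The switch bound partitions each attribute into at most $2\,C_{\Box}(Q)+1$ runs of identical hyperplanes, inducing on each $\dom(\attr(R))$ a grid whose cells are uniformly in $R$ or in $\overline{R}$; taking one box per gap cell yields a box cover of $\overline{R}$. The hardest step, and the main obstacle I expect, is tightening this naive count of at most $\tilde{O}(C_{\Box}(Q)^r)$ per relation down to the claimed $\tilde{O}(C_{\Box}(Q))$. I plan to achieve this either by a charging argument that assigns each gap grid cell to a responsible box in $\mathcal{C}^*$ via the switch--face correspondence and shows that each box is charged only $\tilde{O}(1)$ times across all relations, or, alternatively, by starting from a minimum box cover $\mathcal{B}^*$, refining each $\mathcal{B}^*_R$ into a disjoint cover of $\overline{R}$ with only $\tilde{O}(1)$ blowup (a standard rectilinear decomposition), and observing that each $\bar{t} \in \overline{Q}$ is then covered by at most $m$ boxes in the refinement (at most one per relation $R$ with $\bar{t}|_R \in \overline{R}$), so that any certificate has size at least $|\mathcal{B}^*|/\tilde{O}(m)$ and hence $K_{\Box}(Q) = \tilde{O}(m)\cdot C_{\Box}(Q) = \tilde{O}(C_{\Box}(Q))$.
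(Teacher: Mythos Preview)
The paper's argument is far more direct and bypasses the hyperplane/switch machinery entirely. From a minimum certificate $\mathcal{C}$ it simply projects each (extended) certificate box $b$ onto $\attr(R)$ for every relation $R$; full semi-join reduction is invoked to argue that the projection $b'=\pi_{\attr(R)}(b)$ contains no tuple of $R$ (any such tuple would be dangling, since every point of $b$ lies outside the output), so $b'$ is a gap box of $R$. Collecting these projections over all $R$ yields a box cover of size at most $m\,|\mathcal{C}|=\tilde{O}(C_{\Box}(Q))$, and coverage of each $\overline{R}$ is immediate: any $t'\notin R$ extends to some $t\in\overline{Q}$, which lies in some $b\in\mathcal{C}$, whence $t'\in b'$. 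No grid, no switch--face charging, no exponent $r$ to tame.

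Your route through switch counts and grid cells only reaches $\tilde{O}(C_{\Box}(Q)^r)$, and neither proposed fix closes the gap. The charging idea is a hope, not an argument. The disjoint-refinement alternative has a genuine flaw: from ``each $\bar{t}\in\overline{Q}$ is covered by at most $m$ refined boxes'' you cannot conclude that every certificate has size at least $|\mathcal{B}^*|/\tilde{O}(m)$; low per-point coverage multiplicity gives no lower bound on the size of a covering subfamily, and in any case $C_{\Box}(Q)$ is a minimum over \emph{all} box covers, not over subsets of your particular refinement. Separately, the ``$\tilde{O}(1)$ blowup'' disjoint decomposition of a union of $k$ boxes is not a standard fact in dimension $\geq 2$ (the arrangement of $k$ axis-aligned rectangles can already have $\Theta(k^2)$ complexity). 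You are working much harder than necessary; the one-line projection construction is the missing idea.
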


\iftoggle{appendix-online}{
\begin{proof}
Let $\mathcal{C}$ be a box certificate for $Q$ of size $C_{\Box}(Q)$.
Let $b\in\mathcal{C}$ and $R\in\mathcal{R}$.
Since $b$ is a box in the certificate, all of the tuples contained in $b$
must not be part of the output of $Q$. Since $Q$ has no dangling tuples,
the projection $b'$ of $b$ onto the attributes of $R$ must not contain any tuples
of $R$, otherwise these would be dangling tuples.
Thus $b'$ is a gap box for $R$.

Let $\mathcal{B}_R$ be the set of all such projections
$b'$ of boxes in $\mathcal{C}$ onto the attributes of $R$.
We claim that $\mathcal{B}_R$ forms a box cover for $R$.
Indeed, for any tuple $t'\not\in R$, there exists some tuple
$t$ not in the output of $Q$ such that $\pi_{\attr(R)}(t)=t'$.
Since $t$ is not in the output of $Q$, there exists a box $b\in\mathcal{C}$
which contains $t$, and therefore the corresponding projection box
$b'\in\mathcal{B}_R$ covers $t'$.

This demonstrates that the minimum box cover size for the
relation $R$ is at most $|\mathcal{C}|=C_{\Box}(Q)$.
Since $R$ was an arbitrary relation in $\mathcal{R}$, we can repeat
this process for all other relations in $\mathcal{R}$ to obtain a box
cover for the entire query $Q$ of size $m\cdot |\mathcal{C}|=\tilde{O}(C_{\Box}(Q))$,
so $K_{\Box}(Q)=\tilde{O}(C_{\Box}(Q))$.
By definition, we also have $C_{\Box}(Q)\leq K_{\Box}(Q)$, so this proves the proposition.
\end{proof}

Therefore, Proposition~\ref{propn:semi-join} and Theorem~\ref{thm:domain-ordering-approx} imply that in the special case when $Q$ is fully semi-join reduced, we can use ADORA to obtain a domain ordering $\sigma_{ADORA}$, under which the certificate size would be at most $\tilde{O}(C_{\Box}(Q)^r)$. The proof above relies on the fact that when there are no dangling tuples in the query,
a box certificate for the query immediately yields a box cover for each of the base relations
of the same size. In general queries, the dangling tuples in each of the base relations may form
arbitrarily complex shapes which can make the minimum box cover size much larger than the minimum
box certificate size.
} 

\iftoggle{appendix-online}{
\vspace{-10pt}
\section{Generating a Certificate with Tetris}
\label{sec:tetris-certificate}

It is possible to modify Tetris so that it computes an approximately minimum size
box certificate for $Q$ as it computes the output for $Q$. Given input box cover $\mathcal{B}$,
this simple modification to Tetris will compute a box certificate for $\mathcal{B}$
of size $\tilde{O}(C_{\Box}(\mathcal{B}))$.

We reviewed Tetris briefly in Section~\ref{sec:preliminaries}.
In particular, in this section we will focus on the TetrisReloaded variant,
which initializes its knowledge base of boxes to be empty, then adds boxes to the knowledge base
whenever its subroutine TetrisSkeleton performs a geometric resolution or returns a witness tuple $o$ not covered by a box in the knowledge base.
We defer to reference~\cite{tetris} for a detailed description of TetrisReloaded.

Let $\mathcal{B}$ be the original box cover input to Tetris, and let $\mathcal{K}$
be the knowledge base of gap boxes that Tetris initializes as empty.
As our modification to Tetris, we will add a new set of boxes $\mathcal{C}$ which we initialize as empty.
TetrisSkeleton returns YES if the current knowledge base covers the entire output space,
or it returns a witness tuple $o$ otherwise.
Tetris then checks if $o$ is an output tuple by querying $\mathcal{B}$ for any gap boxes
which contain $o$.
If $\mathcal{B}_o\subseteq\mathcal{B}$ is the set of boxes in $\mathcal{B}$ which contain $o$,
and $\mathcal{B}_o\neq\emptyset$, then $o$ is a gap tuple, so Tetris sets $\mathcal{K}:=\mathcal{K}\cup\mathcal{B}_o$.
At this point, we modify Tetris once again by also setting $\mathcal{C}:=\mathcal{C}\cup\mathcal{B}_o$.
If $\mathcal{B}_o = \emptyset$, then $o$ is an output tuple, so Tetris outputs $o$ and inserts $o$ as a unit gap box into $\mathcal{K}$.
This process repeats until the boxes in $\mathcal{K}$ cover the entire output space.

After our modified Tetris finishes executing, the resulting set $\mathcal{C}$ must form a certificate for $\mathcal{B}$,
because if there is any gap tuple not covered by $\mathcal{C}$, Tetris would have encountered it as a witness
before finishing. Let $W$ be the set of witness gap tuples Tetris encountered which resulted in adding one or
more boxes to $\mathcal{C}$. Then every pair of witnesses $o_1,o_2\in W$ must be independent in the sense
that there is no box $b$ in $\mathcal{B}$ that covers both $o_1$ and $o_2$. Otherwise, if $o_1$ was encountered first,
then $b$ would have been in $\mathcal{K}$ already when $o_2$ was returned by TetrisSkeleton, which is a contradiction.
This implies that any certificate for $\mathcal{B}$ must have size at least $|W|$. By Lemma~\ref{lemma:dyadic},
we also have that $\mathcal{C}$ has size at most $\tilde{O}(|W|)$, since $|\mathcal{B}_o|=\tilde{O}(1)$ for
each $o\in W$. Therefore $|\mathcal{C}|=\tilde{O}(C_{\Box}(\mathcal{B})|$, i.e. $\mathcal{C}$ is a $\tilde{O}(1)$
factor approximation of the minimum certificate for $\mathcal{B}$.

}{}

\end{document}